\Crefname{section}{Section}{Sections}
\Crefname{lemma}{Lemma}{Lemmas}
\Crefname{property}{Property}{Properties}
\Crefname{enumi}{Condition}{Conditions}
\Crefname{figure}{Figure}{Figures}
\theoremstyle{plain}
\newtheorem{theorem}{Theorem}
\newtheorem{lemma}{Lemma}
\newtheorem{definition}{Definition}
\newtheorem{claim}{Claim}
\newtheorem{remark}{Remark}
\newtheorem{property}{Property}
\newcommand{\mfunc}[1]{\mathcal M(#1)}
\newcommand{\compact}{compact\xspace}
\newcommand{\sketch}{sketch\xspace}
\newcommand{\sketches}{sketches\xspace}
\newcommand{\feasible}{feasible\xspace}
\newcommand{\de}[1]{\textrm{deg(\ensuremath{#1})}}
\newcommand{\construction}{\compact construction\xspace}
\title{Recognizing Map Graphs of Bounded Treewidth\thanks{An extended abstract of this manuscript has appeared in the proceedings of the 18th Scandinavian Symposium and Workshops on Algorithm Theory (SWAT 2022) \cite{DBLP:conf/swat/AngeliniBLGMT22}.}}
\author[1]{Patrizio~Angelini}
\author[2]{Michael~A.~Bekos}
\author[3]{Giordano~Da~Lozzo}
\author[4]{Martin~Gronemann}
\author[5]{Fabrizio~Montecchiani}
\author[5]{Alessandra~Tappini}
\affil[1]{Department of Mathematics, Natural, and Applied Sciences, John Cabot University, Rome, Italy. \texttt{pangelini@johncabot.edu}}
\affil[2]{Department of Mathematics, University of Ioannina, Ioannina, Greece. \texttt{bekos@uoi.gr}}
\affil[3]{Department of Engineering, Roma Tre University, Rome, Italy. \texttt{giordano.dalozzo@uniroma3.it}}
\affil[4]{Algorithms and Complexity Group, TU Wien, Vienna, Austria. \texttt{mgronemann@ac.tuwien.ac.at}}
\affil[5]{Department of Engineering, University of Perugia, Perugia, Italy. \texttt{fabrizio.montecchiani@unipg.it, alessandra.tappini@unipg.it}}
\date{}
\begin{document}

\maketitle

\begin{abstract}
A map graph is a graph admitting a representation in which vertices are nations on a spherical map and edges are shared curve segments or points between nations. We present an explicit fixed-parameter tractable algorithm for recognizing  map graphs parameterized by treewidth. The algorithm has time complexity that is linear in the size of the graph and, if the input is a yes-instance, it reports a certificate in the form of a so-called witness. Furthermore, this result is developed within a more general algorithmic framework that allows to test, for any $k$, if the input graph admits a $k$-map (where at most $k$ nations meet at a common point) or a hole-free~$k$-map (where each point of the sphere is covered by at least one nation).
We point out that, although bounding the treewidth of the input graph also bounds the size of its largest clique, the latter alone does not seem to be a strong enough structural limitation to obtain an efficient time complexity. In fact, while the largest clique in a $k$-map graph is $\lfloor 3k/2 \rfloor$, the recognition of $k$-map graphs is still open for any fixed $k \ge 5$.
\end{abstract}

\section{Introduction}

Planarity is one of the most influential concepts in Graph Theory. Inspired by topological inference problems and by intersection graphs of planar curves, in 1998, Chen, Grigni and Papadimitriou~\cite{DBLP:conf/stoc/ChenGP98} suggested the study of map graphs as a generalized notion of planarity. A \emph{map} of a graph $G$ is a function $\mathcal M$ that assigns each vertex $v$ of $G$ to a region $\mfunc{v}$ on the sphere homeomorphic to a closed disk such that no two regions share an interior point, and any two distinct vertices $v$ and $w$ are adjacent in $G$ if and only if the boundaries of $\mfunc{v}$ and $\mfunc{w}$ share at least one point. For each vertex $v$ of $G$, the region $\mfunc{v}$ is called the \emph{nation} of $v$.  A connected open region of the sphere that is not covered by nations is a \emph{hole}. 
A graph that admits a map is a \emph{map graph}, whereas a graph that admits a map without holes is a \emph{hole-free map graph}; \cref{fig:map-0,fig:map-1} show a graph and a map of it, respectively. Map graphs generalize planar graphs by allowing local non-planarity at points where more than three nations meet. In fact, the planar graphs are exactly those graphs having a map in which at most three nations share a boundary point~\cite{DBLP:conf/stoc/ChenGP98,DBLP:journals/siamdm/DujmovicEW17}. 

Besides their theoretical interest, the study of map graphs is motivated by applications in graph drawing, circuit board design, and topological inference problems~\cite{DBLP:journals/jgaa/AngeliniLBFPR17,DBLP:journals/dam/Brandenburg19,DBLP:journals/algorithmica/Brandenburg19,DBLP:conf/soda/ChenHK99}. 
Map graphs are also useful to design parameterized and approximation algorithms for several optimization problems that are \NP-hard on general graphs~\cite{DBLP:journals/jal/Chen01a,DBLP:journals/talg/DemaineFHT05,DBLP:conf/focs/FominLMS12,DBLP:conf/icalp/FominLP0Z19,DBLP:conf/soda/FominLS12}.

A natural and central algorithmic question regards the existence of efficient algorithms for recognizing map graphs. Towards an answer to this question, Chen et al.~\cite{DBLP:conf/stoc/ChenGP98,DBLP:journals/jacm/ChenGP02} first gave a purely combinatorial characterization of map graphs: A graph is a map graph if and only if it admits a witness,  formally defined as follows; see \cref{fig:map-2}. A \emph{witness} of a graph $G = (V,E)$ is a bipartite planar graph $W = (V \cup I,A)$ with $A \subseteq V \times I$ and such that $W^2[V] = G$, where the graph $W^2[V]$ is the \emph{half-square} of $W$, that is, the graph on the vertex set~$V$ in which two vertices are adjacent if and only if their distance in $W$ is 2. 
Here, the vertices in $I$ are meant to represent the adjacencies among nations.
Since $W$ can always be chosen to have linear size in the number of vertices of $G$~\cite{DBLP:journals/jacm/ChenGP02}, the problem of recognizing map graphs is in \NP. In 1998,   Thorup~\cite{DBLP:conf/focs/Thorup98} proposed a  polynomial-time algorithm to recognize map graphs. However, the extended abstract by Thorup does not contain a complete proof of the result and, to the best of our knowledge, a full version has not appeared yet. Moreover, the proposed algorithm has two  drawbacks. First, the time complexity is not specified explicitly (the exponent of the polynomial bounding the time complexity is estimated to be about 120~\cite{DBLP:journals/algorithmica/ChenGP06}; see also~\cite{DBLP:journals/algorithmica/Brandenburg19,DBLP:journals/disopt/MnichRS18}). Second, it does not report a certificate in the positive case; a natural one would be a witness.

Hence, the problem of finding a simple and efficient recognition algorithm for map graphs remains open. In the last years, several authors focused on graphs admitting restricted types of maps. Aside from the already defined hole-free maps, another notable example consists of the \emph{$k$-maps}, in which at most $k$ nations meet at a common point; observe that, when $k\geq n-1$, map graphs and $k$-map graphs trivially coincide.
For instance, Chen studied the density of $k$-map graphs \cite{DBLP:journals/jgt/Chen07}. As another example,  
in a recent milestone paper on linear layouts, Dujmovi\'c et al.~\cite{DBLP:journals/jacm/DujmovicJMMUW20} proved that the queue number of $k$-map graphs is cubic in $k$; this bound has been recently improved to linear~\cite{DBLP:journals/corr/abs-2204-11495}. 
Note that the algorithm by Thorup~\cite{DBLP:conf/focs/Thorup98} cannot be directly used to recognize $k$-map graphs  (unless $k \geq n-1$). Chen et al.~\cite{DBLP:journals/algorithmica/ChenGP06} focused on hole-free $4$-map graphs and gave a cubic-time recognition algorithm for this graph family. 
Later, Brandenburg~\cite{DBLP:journals/algorithmica/Brandenburg19} gave a cubic-time recognition algorithm for general (i.e., not necessarily hole-free) $4$-map graphs, by exploiting an alternative characterization of these graphs closely related to maximal $1$-planarity. 
Notably, a polynomial-time recognition algorithm for the family of (general or hole-free) $k$-map graphs with $k>4$ is still missing.  
In particular, for $k>4$, the only result we are aware of is a characterization of $5$-map graphs in terms of forbidden crossing patterns~\cite{DBLP:journals/dam/Brandenburg19}. A different approach for the original problem is the one by Mnich, Rutter, and Schmidt~\cite{DBLP:journals/disopt/MnichRS18}, who proposed a linear-time algorithm to recognize the map graphs with an outerplanar witness, which also reports a certificate witness, if any. 

We remark that the size of the largest clique in a $k$-map graph is $\lfloor 3k/2 \rfloor$ (see, e.g.,~\cite{DBLP:journals/jacm/ChenGP02}), thus bounding the size of the largest clique does not seem to be a strong enough structural limitation of the input to obtain an efficient time complexity. Despite the notable amount of work, no prior research focuses on further structural parameters of the input graph to design efficient recognition algorithms. In this paper, we address precisely this challenge.

\begin{figure}
\centering
\begin{subfigure}[b]{.3\textwidth}
\centering
\includegraphics[page=10,width=\textwidth]{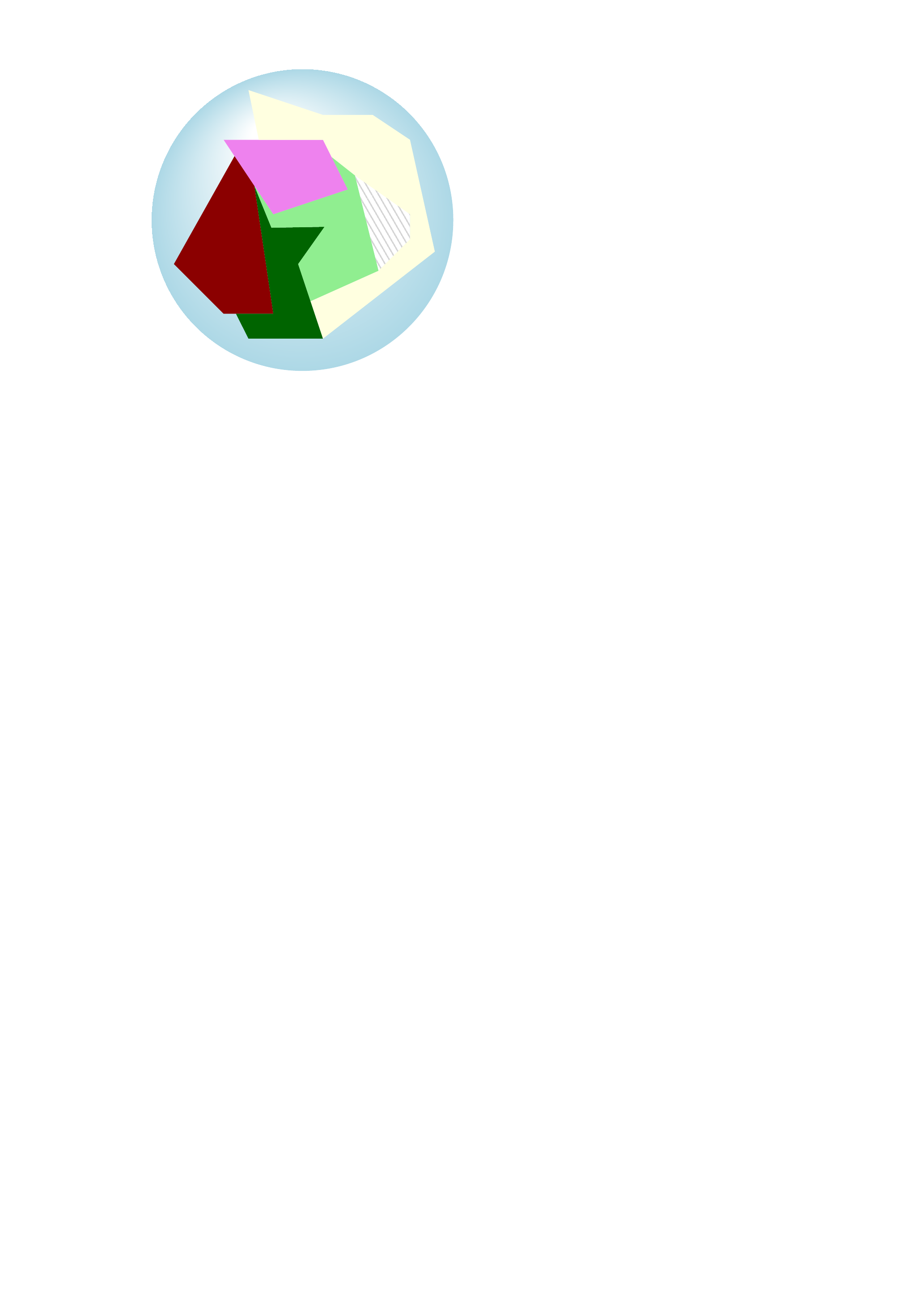}
\subcaption{A graph $G$.\label{fig:map-0}}
\end{subfigure}\hfill
\begin{subfigure}[b]{.3\textwidth}
\centering
\includegraphics[page=1,width=\textwidth]{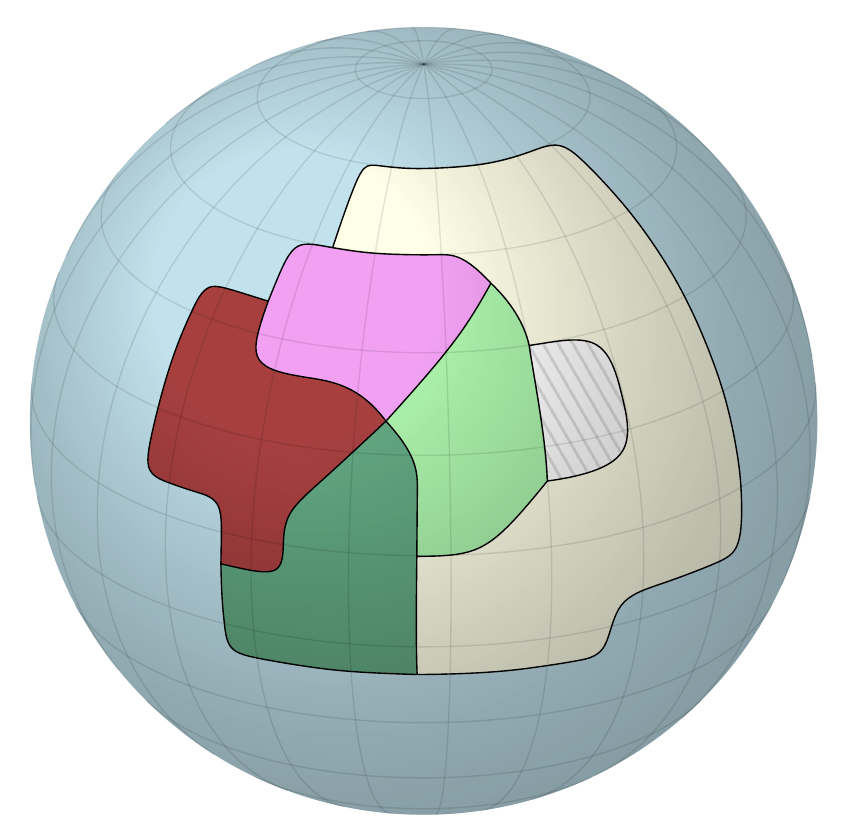}
\subcaption{A map $\cal M$ of $G$.\label{fig:map-1}}
\end{subfigure}\hfill
\begin{subfigure}[b]{.3\textwidth}
\centering
\includegraphics[page=2,width=\textwidth]{figs/sphere_tikz}
\subcaption{A witness $W$ of $G$.\label{fig:map-2}}
\end{subfigure}
\caption{(a) A graph $G$, (b) a map of $G$ - the striped region is a hole, and (c) a witness of~$G$.\label{fig:map}}
\end{figure}
\medskip\noindent\textbf{Our contribution.} Our main result is a novel algorithmic framework that can be used to recognize map graphs, as well as variants thereof; in particular,  hole-free $k$-map graphs and $k$-map graphs.  Recall that, by setting $k=n-1$, our algorithm also recognizes (hole-free) map graphs. In fact, we can also compute the minimum value of $k$ within the same asymptotic running time. The proposed algorithm is parameterized by the treewidth~\cite{DBLP:series/mcs/DowneyF99,DBLP:journals/jal/RobertsonS86} of the $n$-vertex input graph $G$ and its time complexity has a linear dependency in~$n$, while it does not depend on the natural parameter~$k$. Notably, for graphs of bounded treewidth, our algorithm improves over the existing  literature~\cite{DBLP:journals/algorithmica/Brandenburg19,DBLP:journals/algorithmica/ChenGP06,DBLP:conf/focs/Thorup98} in three ways: it solves the problem for {\em any} fixed~$k$, it can deal with {\em both} scenarios where holes are or are not allowed in the sought map, and it exhibits an asymptotically {\em optimal} running time in the input size. The following theorem summarizes our main contribution. 

\begin{theorem}\label{th:main}
Given an $n$-vertex graph $G$ and a tree-decomposition of $G$ of width~$t$, there is a $O(t^{O(t)} \cdot n)$-time
algorithm that computes the minimum $k$, if any, such that $G$ admits a (hole-free) $k$-map. In the positive case, the algorithm returns a certificate in the form of a~witness~of~$G$ within the same time complexity.
\end{theorem}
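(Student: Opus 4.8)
The natural starting point is the witness characterization stated above: $G$ admits a map if and only if it has a witness $W=(V\cup I,A)$ with $W^2[V]=G$, and I would work with witnesses throughout, reformulating the two refinements as constraints on $W$. First observe that any two vertices of $V$ adjacent to a common $i\in I$ are at distance $2$ in $W$ and hence adjacent in $G$, so $N_W(i)\subseteq V$ is always a clique of $G$; consequently every witness already has degree at most $\omega(G)\le t+1$ on the $I$-side. This has two consequences I would exploit: (i) $G$ is a map graph if and only if it is a $(t+1)$-map graph, so the minimum $k$ to be output is at most $t+1$ and it suffices to search over witnesses whose $I$-degrees --- which govern the number of nations meeting at a point --- are bounded by $t+1$; and (ii) the hole-free requirement can be phrased as a local condition on a combinatorial planar embedding of $W$, controlling which faces are allowed to remain ``holes''. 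The problem thus becomes: decide whether $G$ has an embedded witness meeting a prescribed set of local constraints, while minimizing the largest $I$-degree.

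Next I would prove the structural lemma that powers the dynamic program: if $G$ has such a witness, then it has one that is \emph{compact} --- normalized so that degree-$\le 2$ $I$-vertices are suppressed, parallel attachments are merged, and, crucially, the number of $I$-vertices ``local to'' any bag is bounded by a function of $t$ --- whose planar embedding is \emph{aligned} with the given tree-decomposition $\mathcal{T}$ of $G$. Alignment means that each $I$-vertex is assigned to a bag containing its entire ($\le t{+}1$-clique) neighborhood, and that along the separator of any edge of $\mathcal{T}$ the embedded witness is cut into pieces whose interface admits a bounded combinatorial description, in the spirit of sphere-cut decompositions. Together with $\mathrm{tw}(W^2[V])=t$, planarity, and the $I$-degree bound, this yields $\mathrm{tw}(W)=O(t)$ and, more importantly for the running time, a bound of $t^{O(t)}$ on the number of distinct interface types at each bag; linearity of witness size (already known) keeps the total work linear in $n$.

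I would then run a dynamic program on a nice version of $\mathcal{T}$, bottom-up. A state at a bag $B$ --- a \emph{sketch} --- records a candidate partial embedded witness over the processed subtree, summarized on $B$: the rotation information of the embedding cut at $B$, the connectivity pattern and pairing of the boundary pieces, the list of still-``open'' $I$-vertices incident to $B$ with their current degree, which pending faces are destined to be holes, and --- to guarantee $W^2[V]=G$ \emph{exactly} --- for every pair $u,v\in B$ a flag recording whether their $G$-adjacency has already been realized by a common $I$-vertex. The number of sketches per bag is $t^{O(t)}$. At introduce nodes one guesses the $I$-vertices whose neighborhood becomes fully contained in the bag and updates the embedding and the realized-adjacency flags; at forget nodes one verifies that every $G$-edge incident to the forgotten vertex is realized and that no open $I$-vertex or non-hole face is lost; at join nodes one merges two sketches along the common interface, checking consistency of the two partial embeddings; throughout, the maximum $I$-degree created is tracked. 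The root accepts iff some sketch realizes all adjacencies and (in the hole-free variant) leaves no pending hole; the minimum attained degree is the answer, and a witness is recovered by standard back-tracking through the table. Each of the $O(n)$ nodes is processed in $t^{O(t)}$ time, giving the claimed $O(t^{O(t)}\cdot n)$ bound.

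The main obstacle is the interaction of two requirements inside the DP. The first is maintaining a global planar embedding of an object, $W$, whose $I$-vertices are not given in advance, through the tree-decomposition with only $t^{O(t)}$ interface types: this is precisely where the compactness normalization and a sphere-cut-type argument are indispensable, since naively encoding the rotation system around a size-$t$ separator together with an unbounded number of incident $I$-vertices would destroy the state bound. The second is enforcing the exact equality $W^2[V]=G$ rather than mere containment --- the ``no spurious distance-$2$ pair'' direction is a non-local condition that must be captured entirely by the clique-ness of each $I$-neighborhood and by the realized-adjacency flags, and getting the bookkeeping right at forget and join nodes (so that an adjacency is never silently created by two $I$-vertices sitting on opposite sides of a separator, and never dropped) is the delicate heart of the construction.
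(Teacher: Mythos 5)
Your plan follows essentially the same route as the paper: reduce to compact witnesses, note that intersection-vertex neighborhoods are cliques so the optimal $k$ is at most $t+1$, run a bottom-up dynamic program on a nice tree-decomposition whose $t^{O(t)}$ states per bag are bounded embedded ``sketches'' of partial witnesses (rotation/position data on the bag interface), track intersection-vertex degrees for the $k$-map constraint and face conditions for hole-freeness, and recover a witness by backtracking through the table. The one substantive ingredient you assume rather than establish is the precise local form of the hole-free condition: the paper proves that a graph is a hole-free map graph iff it has a compact witness that is a biconnected quadrangulation, which is exactly what makes per-face bookkeeping (edge counters summing to $4$ on completed faces) well-defined in the DP; your ``faces destined to be holes'' flags would need this characterization (and the reduction of non-biconnected inputs to their biconnected components) to be carried out in full.
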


\noindent We remark that the problem of recognizing map graphs can be expressed by using MSO$_2$ logic. Thus the main positive result behind \cref{th:main} can be alternatively achieved by Courcelle's theorem~\cite{DBLP:journals/iandc/Courcelle90}. 
A formal proof is reported in the appendix. 
However, with this approach, the dependency of the time complexity on the treewidth is notoriously very high. As a matter of fact, Courcelle's theorem is generally used as a classification tool, while the design of an explicit ad-hoc algorithm  remains a challenging and valuable task~\cite{DBLP:books/sp/CyganFKLMPPS15}. 

\medskip To prove \cref{th:main}, we first solve the decision version of the problem. For a fixed $k$, we use a dynamic-programming approach, which can deal with different constraints on the desired witness. While we exploit such flexibility to check whether at most $k$ nations intersect at any point and whether holes can be avoided, other constraints could be plugged into the framework such as, for example, the outerplanarity of the witness (as in~\cite{DBLP:journals/disopt/MnichRS18}). In view of this versatility, future applications of our tools~may be expected.

\medskip\noindent\textbf{Proof strategy.} We exploit the characterization in~\cite{DBLP:journals/jacm/ChenGP02} and test for the existence of a suitable witness of the input graph. The crux of our technique is in the computation of suitable records  that represent equivalent witnesses and contain only vertices of a tree-decomposition bag. Each such record must carry enough information, in terms of embedding, so to allow testing whether it can be extended with a new vertex or merged with another witness. Moreover, we need to check whether any such witness yields a $k$-map and, if required, a hole-free one. To deal with the latter property, we provide a strengthening of the characterization in~\cite{DBLP:journals/jacm/ChenGP02}, which we believe to be of independent interest, that translates into maintaining suitable counters on the edges of our records. Additional checks on the desired witness can be plugged in the presented algorithmic framework,  provided that the records store enough information. One of the main difficulties is hence ``sketching'' irrelevant parts of the embedded graph without sacrificing too much  information. (A similar challenge is faced in the context of different planarity and beyond-planarity problems~\cite{DBLP:journals/jcss/GiacomoLM22,DBLP:conf/soda/JansenLS14,DBLP:journals/algorithmica/KociumakaP19}.) Also, when creating such sketches, multiple copies (potentially linearly many) of the same edge may appear, which we need to simplify to keep our records small. The formalization of such records then allows us to exploit a dynamic-programming approach on a tree-decomposition.

\medskip\noindent\textbf{Paper structure.} \cref{se:preliminaries} contains preliminary definitions.  \cref{se:basic} illustrates basic properties of map graphs that will be used throughout the paper. \cref{se:sketches} introduces the concept of ``sketching'' an embedding of a witness,  the key ingredient of the algorithmic framework, which we present in \cref{se:algorithm}.  \cref{se:conclusions} contains open problems raised by our work. 

\section{Preliminaries}\label{se:preliminaries}
We only consider finite, undirected, and simple graphs, although some procedures may produce non-simple graphs. In such a case the presence of self-loops or multiple edges will be clearly indicated. Let $G=(V,E)$ be a graph; for a vertex $v \in V$, we denote by~$N(v)$ the set of neighbors of $v$ in $G$, and by $\de{v}$  the \emph{degree} of $v$, i.e., the cardinality of $N(v)$.

\medskip\noindent\textbf{Embeddings.} 
A \emph{topological embedding}  of a graph $G$ on the sphere $\Sigma$ is a representation of $G$ on $\Sigma$ in which each vertex of $G$ is associated with a point and each edge of $G$ with a simple arc between its two endpoints in such a way that any two arcs intersect only at common endpoints. A topological embedding of $G$ subdivides the sphere into topologically connected regions, called \emph{faces}. If $G$ is connected, the  \emph{boundary} of a face $f$ is a closed walk, that is, a circular list of alternating vertices and edges; otherwise, the boundary of $f$ is a {\em set} of closed walks. Note that a cut-vertex of $G$ may appear multiple times in any such walk.  A topological embedding of $G$ uniquely defines a \emph{rotation system}, that is, a cyclic order of the edges around each vertex. If $G$ is connected, the boundary defining each face can be reconstructed from a rotation system; otherwise,  to reconstruct the boundary of every face $f$, we also need to know which connected components are incident to $f$. We call the incidence relationship between closed walks of different components and faces the \emph{position system} of $G$. A \emph{combinatorial embedding} of $G$ is an equivalence class of topological embeddings that define the same rotation and position systems. An \emph{embedded graph} $G$ is a graph along with a combinatorial embedding. A pair of parallel edges $e$ and $e'$ of $G$ with end-vertices $v$ and $w$ is  \emph{homotopic} if there is a  face of $G$ whose boundary consists of a single closed walk $\langle v,e,w,e' \rangle$.

\medskip\noindent\textbf{Tree-decompositions.} 
Let $(\mathcal{X},T)$ be a pair such that $\mathcal{X}=\{X_1,X_2,\dots,X_\ell\}$ is a collection of subsets of vertices of a graph $G$, called \emph{bags}, and $T$ is a tree whose nodes are in  one-to-one correspondence with the elements of $\mathcal X$. When this creates no ambiguity, $X_i$ will denote both a bag of $\mathcal{X}$ and the node of $T$ whose corresponding bag is $X_i$. The pair $(\mathcal{X},T)$ is a \emph{tree-decomposition} of $G$ if: (i) for every edge $(u,v)$ of $G$, there exists a bag $X_i$ that contains both $u$ and~$v$, and (ii) for every vertex $v$ of $G$, the set of nodes of $T$ whose bags contain $v$ induces a non-empty (connected)
subtree of $T$.
The \emph{width} of  $(\mathcal{X},T)$ is $\max_{i=1}^\ell {|X_i| - 1}$, while the \emph{treewidth} of $G$ is the minimum width over all tree-decompositions of $G$. For an $n$-vertex graph of treewidth $t$, a tree-decomposition of width $t$ can be found in FPT time~\cite{DBLP:journals/siamcomp/Bodlaender96}.

\begin{definition}\label{def:nice}
 A tree-decomposition $(\mathcal{X},T)$ of a graph $G$ is called \emph{nice} if $T$ is a rooted tree with the following properties~\cite{DBLP:journals/jal/BodlaenderK96}.
 \begin{enumerate}[(P.1)]
 \item Every node of $T$ has at most two children.
 \item If a node  $X_i$ of $T$ has two children whose bags are $X_j$ and $X_{j'}$, then $X_i=X_j=X_{j'}$. In this case, $X_i$ is a \emph{join bag}.
 \item If a node $X_i$ of $T$ has only one child~$X_j$, then $X_i \neq X_j$ and there exists a vertex $v \in G$ such that either $X_i = X_j \cup \{v\}$ or $X_i \cup \{v\} = X_j$. In the former case  $X_i$ is an \emph{introduce bag}, while in the latter case  $X_i$ is a \emph{forget bag}.
 \item If a node $X_i$ is a leaf of $T$, then $X_i$ contains exactly one vertex, and $X_i$ is a~\emph{leaf bag}.\end{enumerate} 
 \end{definition} 
 
Note that, given a tree-decomposition of width $t$, a nice tree-decomposition can be computed in $O(t \cdot n)$ time (see, e.g.,~\cite{DBLP:books/sp/Kloks94}).

\section{Basic Properties of Map Graphs and Their Witnesses}\label{se:basic}

The following statements  have already been discussed in the work by Chen et al.~\cite{DBLP:journals/jacm/ChenGP02}, even though in a weaker or different form. For completeness, we provide full proofs.

Let $G=(V,E)$ be a map graph and let $W = (V \cup I,A)$ be a witness of $G$, i.e., $W$ is a planar bipartite graph such that $W^2[V]=G$. A vertex $u \in I$ is an \emph{intersection vertex} of~$W$, while a vertex $v \in V$ is a \emph{real vertex} of~$W$. Also,~we let $n_V=|V|$, $n_I=|I|$,  and $n=n_V+n_I$.

\begin{property}\label{pr:kmapdegree}
A graph is a $k$-map graph if and only if it admits a witness such that the maximum degree of every intersection vertex is $k$.
\end{property}

\begin{property}\label{pr:connectivity}
A graph $G$ admits a map if and only if each of its biconnected components admits a map. Also, if $G$ admits a hole-free map, then $G$ is biconnected.
\end{property}

\begin{figure}[t]
\centering
\begin{subfigure}{.2\textwidth}
\centering
\includegraphics[page=1]{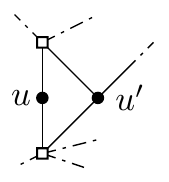}
\subcaption{\label{fig:redundant}}
\end{subfigure}\hfil
\begin{subfigure}{.2\textwidth}
\centering
\includegraphics[page=2]{figs/redundant-twinpair}
\subcaption{\label{fig:twinpair}}
\end{subfigure}
\caption{(a) Inessential intersection vertices, and (b) a twin-pair.}
\end{figure} 
\noindent Let $W = (V \cup I,A)$ be an embedded witness (i.e., with a prescribed combinatorial embedding). An intersection vertex $u \in I$ is \emph{inessential} if $\de{u}=2$ and there exists $u' \in I$ such that $N(u) \subset N(u')$; see \cref{fig:redundant}.  Furthermore, a pair of intersection vertices $u_1,u_2 \in I$ is a \emph{twin-pair} if $N(u_1)=N(u_2)=\{v,w\}$, for some $v,w \in V$, and $W$ contains a face whose boundary consists of a single closed walk with exactly four edges with end-vertices  $v,u_1,w,u_2$; see \cref{fig:twinpair}. Note that removing an inessential vertex or one vertex of a twin-pair from $W$~does~not~modify~$W^2[V]$.

\begin{definition}\label{def:compact}
An embedded witness of a map graph is \emph{\compact} if it contains neither inessential intersection vertices nor twin-pairs. 
\end{definition}

We remark that a \compact witness is not necessarily minimal, i.e., it may contain intersection vertices of degree greater than $2$ whose removal does not modify its half-square; see also~\cite{DBLP:journals/jacm/ChenGP02}.  However, in our setting, removing further information from a witness would have an impact on the proof of \cref{th:holefree}  and  on the recognition algorithm (\cref{se:algorithm}). 

 The next lemma shows that focusing on \compact witnesses is not restrictive.

\begin{restatable}{lemma}{thcharcompact}\label{th:char-compact}
A graph $G=(V,E)$ is a map graph if and only if it admits a \compact witness. Also, $G$ is a $k$-map graph if and only if it admits a \compact witness whose intersection vertices have degree at most $k$.
\end{restatable}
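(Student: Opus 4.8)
The plan is to prove both directions of each biconditional, with the nontrivial direction being: given an arbitrary witness, transform it into a \compact one without changing the half-square (and without increasing the maximum intersection-vertex degree, for the $k$-map statement). The reverse directions are immediate, since a \compact witness is in particular a witness, so by the Chen--Grigni--Papadimitriou characterization $G$ is a map graph, and \cref{pr:kmapdegree} handles the $k$-map refinement.

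\medskip\noindent\textbf{Forward direction.} Suppose $G$ is a map graph, so it admits a witness $W=(V\cup I,A)$; fix an arbitrary combinatorial embedding of $W$ (one exists since $W$ is planar). I would apply a reduction procedure that repeatedly performs one of two local operations as long as the current embedded witness is not \compact:
\begin{itemize}
\item if there is an inessential intersection vertex $u$ (so $\de{u}=2$ and $N(u)\subset N(u')$ for some $u'\in I$), delete $u$ and its two incident edges;
\item if there is a twin-pair $u_1,u_2$ with $N(u_1)=N(u_2)=\{v,w\}$ bounding a quadrilateral face with walk $\langle v,u_1,w,u_2\rangle$, delete $u_2$ and its two incident edges.
\end{itemize}
After each operation, the remaining embedding of the smaller graph is the one inherited by restriction (deleting a vertex of degree $2$ and merging its two incident faces, or deleting one side of a quadrilateral face and merging it with the adjacent face). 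The key invariant to maintain is stated in the excerpt right after \cref{def:compact}: neither operation changes $W^2[V]$. For the inessential vertex $u$: any path of length $2$ through $u$ connects the two vertices of $N(u)$, and those two vertices are also connected by a path of length $2$ through $u'$ since $N(u)\subseteq N(u')$; so deleting $u$ preserves exactly the adjacencies of $G$. For the twin-pair: the two length-$2$ paths through $u_1$ and through $u_2$ both realize the single adjacency $vw$, so deleting $u_2$ again changes nothing. Moreover each operation only deletes edges and removes (or lowers the degree of) intersection vertices, so the maximum degree over $I$ never increases — this gives the $k$-map statement for free. Since each operation strictly decreases $|I|+|A|$, the process terminates; by definition, the terminal embedded witness contains no inessential vertex and no twin-pair, hence is \compact, and it is still a witness of $G$. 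This proves $G$ is a map graph $\Rightarrow$ $G$ admits a \compact witness, and the degree bound carries through verbatim.

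\medskip\noindent\textbf{Main obstacle.} The delicate point is not the termination or the half-square invariant, but checking that after a deletion the object we are left with is still a \emph{witness} in the required sense — in particular that it remains planar with a well-defined combinatorial embedding, that it remains bipartite, and (subtly) that performing one operation does not create pathological configurations that block further progress or that were not anticipated. Planarity and bipartiteness are clearly preserved under deleting a vertex or an edge, and the inherited embedding is well-defined; the genuinely careful part is the bookkeeping around the twin-pair face condition (it is a condition on the embedding, so I must make sure the inherited embedding of the reduced graph still has the quadrilateral-face/inessential structure correctly reflected, and that no spurious multi-edges or self-loops are introduced that would violate simplicity of $W$). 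I would argue that since $u_1,u_2$ have $N(u_1)=N(u_2)=\{v,w\}$ and bound a single quadrilateral face, deleting $u_2$ merely merges that face with the face on the other side of the path $\langle v,u_2,w\rangle$, leaving $u_1$ with its two distinct incident edges to $v$ and $w$, so $W-u_2$ is still simple and bipartite planar. A small amount of care is also needed to confirm that the two operations, applied in any order until neither applies, genuinely leave no inessential vertex and no twin-pair — this is immediate from the stopping condition, but worth stating explicitly so that \cref{def:compact} is literally satisfied.
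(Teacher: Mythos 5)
Your proposal is correct and follows essentially the same route as the paper: fix an embedding of an arbitrary witness, iteratively delete inessential intersection vertices and one vertex of each twin-pair, observe that the half-square is unchanged and that intersection-vertex degrees never increase (in the paper this is phrased via bipartiteness: removed vertices are intersection vertices, so surviving intersection vertices keep their degree), and invoke \cref{pr:kmapdegree} for the $k$-map refinement. Your extra care about termination and the inherited embedding is fine but not a departure from the paper's argument.
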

\begin{proof}
For the first part of the statement, recall that a graph admits a map if and only if it admits a witness~\cite{DBLP:conf/stoc/ChenGP98,DBLP:journals/jacm/ChenGP02}. Thus, if $G$ has a \compact witness, it is a map graph.
For the other direction, suppose that~$G$ admits a map. Let $\hat{W}$ be any embedded witness of $G$. 
Let $W$ be the embedded graph obtained from~$\hat{W}$ by removing all inessential intersection vertices and by iteratively removing one intersection vertex for each twin-pair. Since we only removed degree-$2$ intersection vertices whose neighbors are already incident to a common intersection vertex, it holds that $W^2[V]=\hat{W}^2[V]$. Thus, since $\hat{W}^2[V]=G$, it holds $W^2[V]=G$.
For the second part of the statement, recall that $G$ admits a $k$-map if and only if it has a witness whose intersection vertices have degree at most $k$, by \cref{pr:kmapdegree}. Moreover, we have seen before that, for every witness $\hat{W}$, there exists (at least) one \compact witness~$W$ obtained by fixing a combinatorial embedding of $\hat{W}$ and by possibly removing intersection vertices of degree 2 that are either inessential or part of a twin-pair. 
Since $\hat{W}$ is bipartite, this implies that any intersection vertex that belongs to both $\hat{W}$ and $W$ has the same degree in the two graphs. 
\end{proof}

Given a graph $G$ such that $n_V \ge 3$ and a map $\mathcal M$ of $G$, the \emph{order} of a point $p \in \cal M$, denoted~by~$ord(p)$, is equal to the number of nations and holes whose boundary contains $p$. Let $W = (V \cup I,A)$ be the bipartite embedded graph computed with the \emph{\construction}, defined as follows. For ease of description, we define $W$ by constructing a topological embedding of it; refer to \cref{fig:map}. In particular, the witness of \cref{fig:map-2} is \compact and constructed with the described procedure (which again follows the lines of the work in~\cite{DBLP:journals/jacm/ChenGP02}). For each nation~$\mfunc{v}$, we place the real vertex $v$ in its interior. For each point $p$ such that $ord(p) \ge 3$, we add an intersection vertex $u_p$ to~$I$ and place it at point $p$. We connect each real vertex $v$ to the intersection vertices that lie on the boundary of $\mfunc{v}$, by drawing  crossing-free simple arcs inside $\mfunc{v}$. Note that, for each intersection vertex $u_p$, it holds $\textrm{deg}(u_p)=ord(p)-h(p)$, where $h(p)$ is the number of holes in $\mathcal M$ whose boundary contains~$p$. Finally, we remove inessential intersection vertices and, iteratively, a vertex for each twin-pair. For instance, in \cref{fig:map-1} the nations colored light-yellow and light-green share two order-$3$ points that would give rise to inessential intersection vertices, which are indeed~not~reported~in~\cref{fig:map-2}.

\begin{lemma}\label{le:compact}
Let $W = (V \cup I,A)$ be the embedded graph obtained from the map $\mathcal M$ of $G$ by means of the \construction. Then $W$ is a \compact witness of $G$.
\end{lemma}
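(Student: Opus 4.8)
The plan is to verify the two defining conditions of a \compact witness for the graph $W$ produced by the \construction: first that $W$ is a genuine witness of $G$ (i.e.\ $W$ is planar, bipartite, and $W^2[V]=G$), and second that $W$ contains neither inessential intersection vertices nor twin-pairs. The second part is immediate from the construction, since the last step of the \construction explicitly deletes all inessential intersection vertices and one vertex of every twin-pair, iterating until none remain; the only thing to observe is that these deletions do not reintroduce new inessential vertices or twin-pairs in a way that would not be caught by the iteration, which follows because deleting a degree-$2$ intersection vertex cannot raise the degree of any other vertex and cannot create a new quadrilateral face whose boundary is a single closed walk on four edges without it already having been one (or being detectable in a subsequent iteration). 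So the heart of the argument is the first part.

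\textbf{Key steps.} First I would argue that $W$ is bipartite: by construction every edge of $W$ joins a real vertex $v$ (placed inside a nation) to an intersection vertex $u_p$ (placed at a point $p$ on the boundary of $\mfunc{v}$), so $V$ and $I$ form the two sides of a bipartition. Next, planarity: the \construction yields an explicit topological embedding on the sphere --- real vertices sit in the interiors of pairwise interior-disjoint nations, intersection vertices sit at boundary points, and the arcs connecting $v$ to the intersection vertices on $\partial\mfunc{v}$ are drawn crossing-free inside $\mfunc{v}$, so arcs belonging to different nations meet at most at shared boundary points, which are vertices of $W$; hence the drawing is a valid topological embedding and $W$ is planar. (Removing vertices in the final cleanup step preserves planarity.) The substantive step is showing $W^2[V]=G$, which I would prove by a double inclusion. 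For $E(G)\subseteq E(W^2[V])$: if $vw\in E(G)$ then the boundaries of $\mfunc{v}$ and $\mfunc{w}$ share a point; one shows they share a point $p$ with $ord(p)\ge 3$ --- taking a maximal boundary arc shared by the two nations, at least one endpoint of it must be a point where a third nation or a hole also appears, hence $ord\ge 3$; such a $p$ gets an intersection vertex $u_p$ adjacent to both $v$ and $w$, giving a path of length $2$. (If $vw$ is realized only by order-$2$ contact points one has to check that these are exactly the inessential vertices that get deleted but whose role is subsumed by another intersection vertex on $\partial\mfunc{v}\cap\partial\mfunc{w}$; this needs the definition of inessential and the remark following it.) For $E(W^2[V])\subseteq E(G)$: if $v,w$ have a common neighbor $u_p\in I$ in $W$, then $p$ lies on both $\partial\mfunc{v}$ and $\partial\mfunc{w}$, so these boundaries intersect and $vw\in E(G)$; and this property is preserved by the deletion of inessential/twin vertices precisely because, as noted in the text, such deletions do not change the half-square.

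\textbf{Main obstacle.} I expect the delicate point to be the forward inclusion $E(G)\subseteq E(W^2[V])$, specifically arguing that every adjacency of $G$ is witnessed by a point of order at least $3$ (equivalently, by an intersection vertex that survives the cleanup). The subtlety is that two adjacent nations might touch only along arcs or at points of order exactly $2$; one must use a topological/connectivity argument on the sphere --- roughly, that a maximal shared boundary arc between $\mfunc{v}$ and $\mfunc{w}$ cannot have both endpoints be order-$2$ points belonging solely to $\mfunc v$ and $\mfunc w$, since then $\mfunc v\cup\mfunc w$ would be a closed disk whose complement is a single region and the two nations would, on a sphere with $n_V\ge 3$, necessarily meet some third nation or bound a hole at an endpoint of that arc --- and then reconcile this with the removal of inessential vertices by invoking the Remark that such removals preserve $W^2[V]$. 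Everything else (bipartiteness, planarity, the reverse inclusion, absence of inessential vertices and twin-pairs) I expect to be routine given the construction and the definitions already set up in the excerpt.
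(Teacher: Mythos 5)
Your proposal is correct and takes essentially the same route as the paper's proof: the embedding and the absence of inessential vertices and twin-pairs are read off the construction, and $W^2[V]=G$ is shown by noting that every order-$2$ contact point is interior to a shared boundary arc whose endpoints have order at least $3$ (with $n_V\ge 3$ ruling out the degenerate full-circle case), the cleanup step being harmless since removing inessential or twin-pair vertices does not change the half-square. The paper's argument is simply a terser version of this, so no further comparison is needed.
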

\begin{proof}
The fact that the \construction defines a topological embedding of $W$, and in particular that each arc is simple and no two arcs intersect at an interior point, follows by construction. Moreover, we explicitly removed inessential intersection vertices and twin-pairs, if any.  So, it remains to prove that $W^2[V]=G$. By construction, for each edge of $W^2[V]$, there is an edge in $G$. Also, any edge of $G$ is represented by at least one point of $\mathcal M$ whose order is at least $2$. Since we created an intersection vertex for each point of order greater than $2$, it remains to argue about points of order exactly $2$. Any such a point is an interior point of a simple arc along which two nations $\mfunc{v}$ and $\mfunc{w}$ touch. The two endpoints of this arc must have order at least $3$, which implies that edge $(v,w)$ exists in $W^2[V]$.
\end{proof}

\noindent In~\cite{DBLP:journals/jacm/ChenGP02}, it is observed (without a formal argument) that a map graph is hole-free if and only if it admits a witness whose faces have 4 or 6 edges each. The next characterization improves over this observation and hence can be of independent interest. A connected embedded graph is a \emph{quadrangulation} if each face boundary consists of a single closed walk with 4 edges.

\begin{restatable}{theorem}{thholefree}\label{th:holefree}
A graph is a hole-free map graph if and only if it admits a \compact witness that is a biconnected quadrangulation.
\end{restatable}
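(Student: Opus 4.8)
The proof naturally splits into the two implications; throughout one may assume $n_V\ge 3$, since the finitely many smaller instances are handled directly.

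\smallskip\noindent\emph{Necessity.} Suppose $G$ admits a hole-free map $\mathcal M$. By \cref{pr:connectivity}, $G$ is biconnected. Apply the \construction to $\mathcal M$ to obtain an embedded graph $W$, which is a \compact witness of $G$ by \cref{le:compact}; the plan is to show that $W$ is in fact a biconnected quadrangulation. Since $\mathcal M$ is hole-free, $h(p)=0$ for every point $p$, so $\de{u_p}=ord(p)-h(p)=ord(p)\ge 3$ for every intersection vertex; in particular the final removal step of the \construction deletes nothing (it only targets degree-$2$ intersection vertices), so $W$ equals the graph produced before that step, and every intersection vertex on the boundary of a nation $\mfunc v$ is precisely a neighbour of $v$ in $W$. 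Next I would argue that $W$ is a quadrangulation: the spokes inside $\mfunc v$ partition it into $\de v$ sectors, each delimited by two spokes $vu_1$, $vu_2$ and by the sub-arc $\alpha$ of $\partial\mfunc v$ between two consecutive intersection vertices $u_1,u_2$. Every interior point of $\alpha$ has order $2$, hence---there being no holes---lies on the boundary of exactly one further nation; a connectedness argument (a change of that nation along $\alpha$ would create an order-$\ge 3$ point in its interior) shows the nation $\mfunc w$ is the same throughout $\alpha$, and symmetrically $u_1,u_2$ are consecutive along $\partial\mfunc w$. Thus $\alpha$ also bounds a sector $S'$ of $\mfunc w$, and, since nations are interior-disjoint and the spokes stay inside nations, $S\cup S'$ is a face of $W$ bounded by the $4$-cycle $v,u_1,w,u_2$. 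As the nations cover $\Sigma$, every face of $W$ arises this way. Finally, $W$ is connected (real vertices are joined through the length-$2$ paths realising the edges of $G$, and each intersection vertex has a real neighbour) and of minimum degree at least $2$ (for $v\in V$, $\de v\le 1$ would force $\mfunc v$ to share its whole boundary with a single nation, hence $n_V=2$). A connected plane quadrangulation of minimum degree at least $2$ has no cut vertex---otherwise some facial walk would visit a vertex twice, and a length-$4$ closed walk that is not a cycle forces an incident degree-$1$ vertex---so $W$ is biconnected, as desired.

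\smallskip\noindent\emph{Sufficiency.} Conversely, let $W=(V\cup I,A)$ be a \compact witness of $G$ that is a biconnected quadrangulation, fixed together with its embedding on $\Sigma$; only the facts that $W$ is biconnected, is a quadrangulation, and satisfies $W^2[V]=G$ will be used. Since $W$ is bipartite, each facial $4$-cycle reads $v,u,w,u'$ with $v,w\in V$ and $u,u'\in I$. I would place a point $c_f$ in the interior of each face $f$ and connect it to the two intersection vertices of $f$, splitting $f$ into a $v$-part and a $w$-part, and then define $\mfunc v$, for each $v\in V$, as the union of the $v$-parts of all faces incident to $v$ (the edges of $W$ at $v$ becoming interior spokes). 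Because $W$ is biconnected, $v$ is not a cut vertex, so the faces around $v$ form a disk and (using that $W$ is simple) the boundary of $\mfunc v$ is a simple closed curve alternating between the neighbours of $v$ in $W$ and the points $c_f$; hence $\mfunc v$ is a closed disk, and distinct nations are interior-disjoint since within each face the $v$-part and the $w$-part meet only along the arc through $c_f$. A short case check on boundaries then shows that $\mfunc v$ and $\mfunc w$ share a point exactly when $v$ and $w$ have a common neighbour in $W$, i.e.\ exactly when they are adjacent in $W^2[V]=G$; hence $\mathcal M$ is a map of $G$. It has no holes because the two parts of each face exhaust it among the nations of its two real vertices, while the remaining points of $W$ lie on nation boundaries or in nation interiors, so the nations cover $\Sigma$.

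\smallskip\noindent The crux is the necessity direction, and within it the claim that every face of the \compact witness is a $4$-cycle: this is the only place where hole-freeness is essential (it is what makes an order-$2$ boundary arc of a nation shared with a single other nation), and it must also avoid degenerate configurations such as two nations meeting along an entire boundary circle, which is exactly why the assumption $n_V\ge 3$ is convenient. The biconnectivity of $W$, reduced to the elementary fact that a connected plane quadrangulation of minimum degree at least $2$ is $2$-connected, is a minor additional point, as is checking in the sufficiency direction that the reconstructed regions are genuine closed disks.
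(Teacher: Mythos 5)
Your proof is correct at the paper's level of topological rigor; the sufficiency direction is essentially the paper's own construction, while your necessity direction takes a genuinely different route. For ($\Leftarrow$), the paper inserts a dummy vertex into each quadrilateral face, obtains a triangulation, and assigns to each real vertex the union of its incident triangles; splitting each face along an arc through $c_f$ joining its two intersection vertices produces exactly the same regions, so here the two arguments coincide. For ($\Rightarrow$), the paper argues by contradiction: it assumes some facial walk of the witness produced by the \construction has at least six edges and performs a case analysis (according to whether the intersection vertex $u$ is the unique one on the walk adjacent to both of its real neighbours), using hole-freeness in each case to produce an extra intersection vertex or a hole. You instead argue positively: hole-freeness gives $\de{u_p}=ord(p)\ge 3$ for every intersection vertex, so the removal step of the \construction is vacuous, and then every face of $W$ is the union of two nation sectors glued along an order-$2$ boundary arc, hence bounded by a $4$-cycle; this yields the quadrangulation property and, via your minimum-degree observation (or directly, since all facial walks are then simple $4$-cycles), biconnectivity --- a point the paper dispatches with the terser claim that quadrangulation plus simplicity implies biconnectivity. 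Your route makes more transparent where hole-freeness enters (each order-$2$ arc is shared by exactly two nations) and describes the facial structure of the constructed witness explicitly, at the price of the local-constancy argument along the arc $\alpha$, which you state but do not fully elaborate; this is acceptable, as the paper's own proof reasons about arcs between order-$\ge 3$ points at the same level of informality. One caveat applies to both proofs: the reduction to $n_V\ge 3$ is not entirely innocuous, since for $n_V=2$ the graph $K_2$ is a hole-free map graph whose biconnected quadrangulation witnesses all contain twin-pairs and hence are never \compact; the paper sidesteps this only implicitly, by defining the order of a point (and hence the \construction) only for $n_V\ge 3$.
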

\begin{proof}
\begin{figure}[t]
    \centering
    \begin{subfigure}{.22\textwidth}
    \includegraphics[page=1,width=\textwidth]{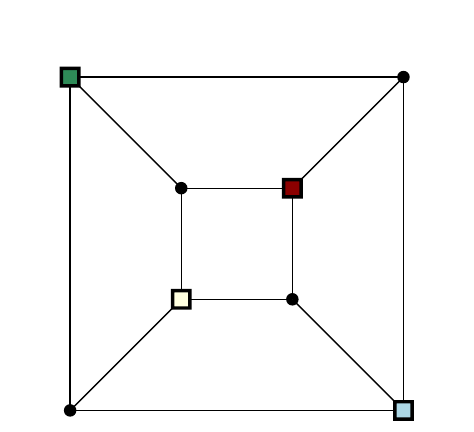}
    \subcaption{$W$\label{fig:holefree-a}}
    \end{subfigure}\hfil
    \begin{subfigure}{.22\textwidth}
    \includegraphics[page=5,width=\textwidth]{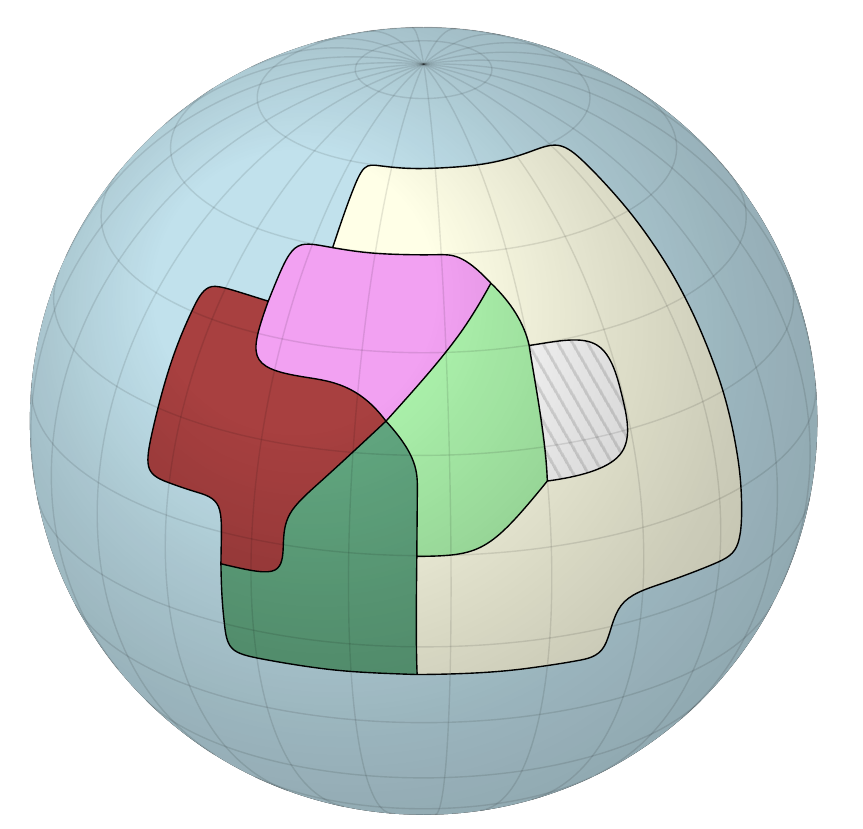}
    \subcaption{$W^*$\label{fig:holefree-b}}
    \end{subfigure}\hfil
    \begin{subfigure}{.22\textwidth}
    \includegraphics[page=6,width=\textwidth]{figs/sphere}
    \subcaption{$\cal M$\label{fig:holefree-c}}
    \end{subfigure}
    \caption{Illustration for the proof of \cref{th:holefree}.\label{fig:holefree}}
\end{figure}
($\Leftarrow$) Refer to \cref{fig:holefree}. If a graph $G$ admits a \compact witness $W$, then  by \cref{th:char-compact} $G$ is a map graph. Thus we only need to show that $W$ yields a map that is hole-free, by exploiting the assumption that $W$ is a biconnected quadrangulation. Let $W^*$ be the embedded graph defined as follows: We add a dummy vertex inside each face of $W$ and connect it to all vertices on the boundary of the face. Since $W$ is biconnected, each face boundary is a simple cycle, and therefore $W^*$ is an embedded triangulation, i.e., each face contains three edges on its boundary. Let $\Gamma$ be a topological embedding on the sphere~of~$W^*$. It follows that each triangular face of $\Gamma$ is incident to exactly one real vertex of $W$.  Also, for each real vertex $v$ of $W$, the union of the triangular faces incident to $v$ defines a region $R_v$ that contains $v$ in its interior. The latter property of $\Gamma$ allows us to construct a map~$\cal M$ of $G$ by setting~$\mfunc{v}$ to be equal to the closure of~$R_v$, for each real vertex $v$ of $W$. The fact that $\cal M$ is a hole-free map follows by construction. Namely, all points of the sphere are covered by nations, hence there are no holes. Also, the points of $\cal M$ of order at least $3$ are in a one-to-one correspondence with the intersection vertices of $W$, and any other order-$2$ point of $\cal M$ lies along a simple arc connecting two points of higher order.

($\Rightarrow$) Let $\cal M$ be a hole-free map of a graph $G$. By \cref{pr:connectivity}, $G$ is biconnected.
Let  $W$ be the \compact witness of $G$ computed by the \construction  from $\cal M$; for instance, the \compact witness in \cref{fig:holefree-a} is constructed from the map in \cref{fig:holefree-c} by using the \construction. Since $G$ is biconnected, $W$ is (at least) connected, and thus the boundary of any face of $W$ consists of a single closed walk. In the following we prove that $W$ is a quadrangulation. This, together with the fact that $W$ is simple, implies biconnectivity. Since $W$ is connected, simple and bipartite, the boundary of each of its faces consists of a single closed walk containing at least four edges. Assume for a contradiction that $W$ contains a face $f$ with more than four edges on the closed walk $\pi$ defining its boundary. Then, since $W$ is bipartite, $\pi$ contains at least six edges. Consider any intersection vertex $u$ on $\pi$.  Let $v$ and $w$ be the two (real) vertices that precede and follow $u$ along $\pi$, respectively. We distinguish two cases based on whether $u$ is or is not the only intersection vertex in $\pi$ that is adjacent to both $v$ and $w$.

Suppose first that $u$ is the only intersection vertex in $\pi$ that is adjacent to both $v$ and $w$. By construction, $u$ has been placed on a point $p \in \cal M$ such that $ord(p) \ge 3$. Since $\cal M$ is hole-free, $p$ is the endpoint of a simple arc $a$ that forms a shared boundary between $\mfunc{v}$ and $\mfunc{w}$. Let $p'$ be the other endpoint of $a$. Since $ord(p') \ge 3$ and $\cal M$ is hole-free, again $W$ contains an intersection vertex $u'$ that has been placed on $p'$. Either $u'$ belongs to $\pi$ or not. In the first case, we contradict the fact that $u$ is the only intersection vertex of $\pi$ adjacent to both $v$ and $w$; see \cref{fi:quad-1} for an illustration. In the second case, $p$ is on the boundary of a hole, which  contradicts the fact that $\cal M$ is hole-free; see \cref{fi:quad-2} for an illustration. 

Suppose now that there is another intersection vertex $u'$ in $\pi$ adjacent to both $v$ and $w$; refer to \cref{fi:quad-3} for an illustration. Since $u$ and $u'$ are both adjacent to $v$ and $w$ and both belong to $\pi$, in order for $\pi$ to contain six (or more) edges, at least one of $v,u,w,u'$ occurs more than once along $\pi$, and between its (at least) two occurrences, there must be a real vertex $w'$. Note that, by definition of $v$ and $w$, such vertex occurring multiple times along $\pi$ cannot be $u$. We assume that such vertex is $u'$, the remaining cases can be handled with symmetric arguments. 
\begin{figure}[h]
\centering
\begin{subfigure}{.32\textwidth}
\centering
\includegraphics[page=1]{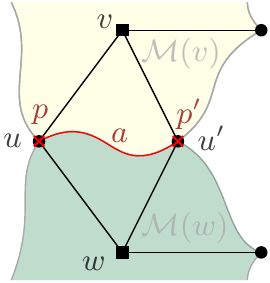}
\subcaption{\label{fi:quad-1}}
\end{subfigure}\hfil
\begin{subfigure}{.32\textwidth}
\centering
\includegraphics[page=2]{figs/compactquad}
\subcaption{\label{fi:quad-2}}
\end{subfigure}\hfil
\begin{subfigure}{.32\textwidth}
\centering
\includegraphics[page=3]{figs/compactquad}
\subcaption{\label{fi:quad-3}}
\end{subfigure}
\caption{Illustrations for the proof of \cref{th:holefree}.\label{fi:quad}}
\end{figure}Consider a traversal of $\pi$ that visits $v,u,w$ in this order. Up to a renaming of the vertices, we can assume that $w'$ is the vertex that precedes the last occurrence of $u'$ in this traversal. Let $q$ be the point of $\cal M$ where $u'$ has been placed on. By construction, $q$ is the endpoint of a simple arc $b$ that forms a shared boundary between $\mfunc{w'}$ and $\mfunc{v}$. Let $q'$ be the other endpoint of $b$. Since $ord(q') \ge 3$ and $\cal M$ is hole-free, again $W$ contains an intersection vertex $u''$ that has been placed on $q'$ and that belongs to $\pi$ by construction. Since $u''$ is adjacent to both $v$ and $w'$, we get a contradiction to the fact that the last occurrence of $u'$ before $v$ is encountered after $w'$ in the  above traversal.
\end{proof}

\begin{restatable}{lemma}{lesize}\label{le:size}
A (hole-free) map graph $G$ admits a \compact witness with $n \le 6n_V-10$ (respectively, $n \le 3n_V-4$) vertices.   
\end{restatable}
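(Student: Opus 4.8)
The plan is to bound the number of vertices $n = n_V + n_I$ of a \compact witness $W$ by bounding $n_I$ in terms of $n_V$, using the planarity and bipartiteness of $W$ together with the structural constraints imposed by \compactness. First I would dispose of trivial cases ($n_V \le 2$, or $G$ disconnected), so that I may assume $W$ is connected. The main estimate is an Euler-formula/discharging argument on the embedded bipartite planar graph $W$: since $W$ is bipartite, every face has at least $4$ edges on its boundary, so $|A| \le 2(n-2)$, and since $W$ is bipartite with parts $V$ and $I$, $|A| = \sum_{u \in I} \deg(u) \ge 2 n_I$ (every intersection vertex has degree at least $2$ — in fact a \compact witness has no degree-$0$ or degree-$1$ intersection vertices, which one should note or prove as a preliminary observation). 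Combining $2 n_I \le |A| \le 2(n-2) = 2(n_V + n_I - 2)$ already gives nothing by itself; the point is that this cheap bound is not enough, and the \compactness hypothesis must be used to get the clean $n \le 6 n_V - 10$.

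The heart of the argument is therefore: in a \compact witness, intersection vertices of degree exactly $2$ are ``rare''. Let me partition $I$ into $I_{\ge 3}$ (degree $\ge 3$) and $I_2$ (degree $2$). For $I_{\ge 3}$, from $|A| = \sum_{u\in I}\deg(u) \ge 3|I_{\ge 3}| + 2|I_2|$ and $|A|\le 2n_V+2n_I-4 = 2n_V + 2|I_{\ge 3}| + 2|I_2| - 4$, we get $|I_{\ge 3}| \le 2n_V - 4$. For $I_2$, the key is that a degree-$2$ intersection vertex $u$ with $N(u) = \{v,w\}$ corresponds to the edge $(v,w)$ of $G$, and \compactness forbids two such vertices from forming a twin-pair and forbids $u$ from being inessential. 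I would argue that one can charge each vertex of $I_2$ to a distinct face or to an edge of the graph $W'$ obtained by suppressing the degree-$2$ vertices (replacing each by an edge), controlling multiplicities via the no-twin-pair and no-inessential-vertex conditions, so that $|I_2|$ is also $O(n_V)$. Putting the two bounds together and tracking constants carefully yields $n = n_V + |I_{\ge 3}| + |I_2| \le 6n_V - 10$. For the hole-free case I would instead invoke \cref{th:holefree}: a \compact witness that is a biconnected quadrangulation has every face bounded by exactly $4$ edges, so Euler's formula gives $|A| = 2n - 4$ exactly and $2 n_I \le |A| = 2(n_V + n_I) - 4$, i.e. $n_I \le n_V - 2$, hence $n \le 2n_V - 2$ — wait, the target is $n \le 3n_V - 4$, so there one should be more careful and use that real vertices also have degree $\ge 2$, or count faces versus the two sides; in a bipartite quadrangulation with parts of sizes $n_V, n_I$ one has $n_I = |A|/\bar d_I$ and the face count $f = |A|/2$, Euler gives $n - |A| + |A|/2 = 2$, so $|A| = 2n-4$ and then $n_I \le n_V - 2$ gives $n\le 2n_V - 2$; since this is stronger than $3n_V-4$ the claimed bound follows a fortiori (and I would present the slightly weaker stated bound, or remark the improvement).

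The main obstacle I anticipate is the bound on $|I_2|$, the degree-$2$ intersection vertices: the naive charging scheme can fail because many degree-$2$ vertices can share the same pair of neighbors $\{v,w\}$ as long as they do not form homotopic (twin) pairs, and similarly the suppression operation $W \to W'$ can create parallel or loop edges that must be accounted for. The resolution should be that between any two consecutive (in the rotation around, say, $v$) parallel edges of $W'$ joining $v$ and $w$, the bounded face is a quadrilateral of $W$ formed by two such degree-$2$ vertices — i.e. a twin-pair — which \compactness excludes; and the inessential-vertex condition rules out a degree-$2$ vertex whose neighborhood is contained in that of another intersection vertex. Hence each pair $\{v,w\}$ supports at most a bounded (constant) number of degree-$2$ intersection vertices beyond what is ``witnessed'' by higher-order vertices, and a planarity count on the simple subgraph of $G$ they induce closes the argument. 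Care with the exact constants ($6n_V - 10$ versus, say, $6n_V - 12$) will require being precise about whether we lose the additive slack in the Euler bound or in the $|I_{\ge 3}|$ bound, and about small cases; I would reconcile these at the end rather than carry tight constants through the whole discharging.
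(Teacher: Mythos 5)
Your general-case argument is essentially the paper's: the bound $|I_{\ge 3}|\le 2n_V-4$ via bipartite planarity, and the suppression of degree-$2$ intersection vertices into edges with the no-twin-pair condition ruling out homotopic parallel edges, is exactly the route taken there (the suppressed edges form a planar multigraph on the $n_V$ real vertices with no homotopic parallel pair and no loops, hence at most $3n_V-6$ of them, giving $n\le n_V+(2n_V-4)+(3n_V-6)=6n_V-10$). One caveat on your closing accounting: it is \emph{not} true that each pair $\{v,w\}$ supports only a constant number of degree-$2$ intersection vertices --- one can nest linearly many pairwise non-homotopic ones, each separated from the next by a real vertex and its private intersection vertices --- so counting ``the simple subgraph of $G$ they induce'' undercounts; you must count the suppressed edges as a multigraph (with multiplicities) and use the homotopy-free planar edge bound directly.

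The hole-free branch, however, has a genuine gap. From $|A|=2n-4$ (quadrangulation) and $|A|\ge 2n_I$ you conclude $n_I\le n_V-2$, but $2n_I\le 2(n_V+n_I)-4$ is vacuous: the $2n_I$ cancels and yields only $n_V\ge 2$, no bound on $n_I$ at all. Indeed your ``stronger'' conclusion $n\le 2n_V-2$ is false: the compact witness of a planar triangulation (one degree-$3$ intersection vertex per face) is a biconnected quadrangulation with $n_I=2n_V-4$, i.e.\ $n=3n_V-4$. The missing ingredient --- which the paper isolates as a separate claim --- is that a compact witness that is a quadrangulation has \emph{no} degree-$2$ intersection vertices: if $\deg(u)=2$, then on any quadrilateral face containing $u$ the opposite vertex $u'$ is an intersection vertex with $N(u)\subseteq N(u')$, so either $u$ is inessential or $\{u,u'\}$ is a twin-pair, contradicting compactness. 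With minimum intersection degree $3$, the count $3n_I\le |A|\le 2(n_V+n_I)-4$ gives $n_I\le 2n_V-4$ and hence the stated $n\le 3n_V-4$; your degree-$\ge 2$ preliminary observation is too weak to close this case.
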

\begin{proof}
Suppose first that $G$ is hole-free. Let $W=(V \cup I,A)$ be a \compact witness of $G$ that is a quadrangulation, which exists by \cref{th:holefree}.  We start with the following claim.

\begin{claim}\label{cl:no-deg2}
 $\forall u \in I$, it holds $\de{u}>2$.
\end{claim}
\begin{proof}[Proof of the claim]
 Suppose, for a contradiction, that $W$ contains an intersection vertex $u$ such that $\de{u}=2$ and let $f$ be any face of $W$ that contains $u$ on its boundary. Let $u'$ be the other intersection vertex on the boundary of $f$, and observe that $N(u) \subseteq N(u')$. If $N(u)=N(u')$, then $u$ and $u'$ form a twin-pair. Otherwise $u$ is inessential. Both cases contradict the fact that $W$ is \compact.
\end{proof}

Since $W$ is crossing-free and bipartite (because $W$ is a witness), it holds $|E| =\sum_{u \in I}\textrm{deg}(u) \le 2(n_V+n_I)-4$. By \cref{cl:no-deg2}, we have $\sum_{u \in I}\textrm{deg}(u) \ge 3n_I$. Putting all together, we have $2(n_V+n_I)-4 \ge \sum_{v \in I}\textrm{deg}(u) \ge 3n_I$. Consequently, $n_I \le 2n_V-4$ and thus $n = n_V+n_I \le 3n_V-4$.

Suppose now that $G$ is not hole-free. Let $W$ be any \compact witness of $G$ and let $W'$ be the graph obtained by removing all degree-$2$ intersection vertices from $W$. 
Since $W'$ is crossing-free and bipartite and since its intersection vertices have degree at least 3, we can conclude as above that $W'$ has at most $3n_V-4$ vertices. We now claim that $W$ has no more than $3n_V-6$ degree-$2$ intersection vertices, which concludes the proof, since these are exactly the vertices in $W \setminus W'$. To prove the claim, replace each degree-$2$ intersection vertex of $W$ with an edge connecting its two neighbors. Since $W$ is crossing-free, the resulting graph $W^*$ is also crossing-free. Also, each intersection vertex in $W^*$ has degree greater than $2$. Since $W$ contains no twin-pairs, $W^*$ contains no pairs of homotopic parallel edges. Thus, Euler's formula for planar graphs still applies and therefore $W^*$ contains at most $3n_V-6$~edges. Since each edge of $W^*$ corresponds to at most one degree-$2$ intersection vertex of $W$, the claim follows.
\end{proof}

Based on \cref{le:size}, we can make the following remark.

\begin{remark}\label{re:size}
Without loss of generality, we assume in the following that any \compact witness $W$ of $G$ has $n \le 3n_V-4$ vertices if $G$ is hole-free, or  $n \le 6n_V-10$ vertices otherwise.
\end{remark}

\section{Embedding Sketches}\label{se:sketches}
Let $G$ be an input graph. \cref{pr:connectivity} allows us to assume that $G$ is~biconnected, and thus every witness of $G$, if any, is connected. Also, by \cref{th:char-compact}, it suffices to consider \compact witnesses.

Let $(\mathcal{X}, T)$ be a nice tree-decomposition of $G$ of width $t=\omega-1$, i.e., each bag contains at most~$\omega$ vertices. 
Given a bag $X \in \cal X$, we denote by $T_X$ the subtree of $T$ rooted at $X$, and by $G_X=(V_X,E_X)$ the subgraph of $G$ induced by all the vertices in all bags~of~$T_X$.  Let $W_X=(V_X \cup I_X, A_X)$ be a \compact witness of $G_X$ (in particular,  $W_X^2[V_X]=G_X$). Note that, although $G$ is connected, $G_X$ may have multiple connected components. However, since $G$ is connected, each connected component of $G_X$ must contain at least one vertex~of~$X$. Moreover, for each connected component $C$ of $G_X$, there is a connected component $C'$ of $W_X$ such that $C'$ is a witness of $C$. A vertex of $W_X$ is an \emph{anchor vertex} if it is either a real vertex of $X$ or an intersection vertex whose neighbors in $W_X$~all~belong~to~$X$. Observe that if an intersection vertex $u$ has a neighbor $v$ in $V_X \setminus X$, then no real vertex in $V \setminus V_X$ is adjacent to $v$, and therefore there is no way to add further edges to $u$ without creating a false adjacency involving $v$.

\begin{figure}[t]
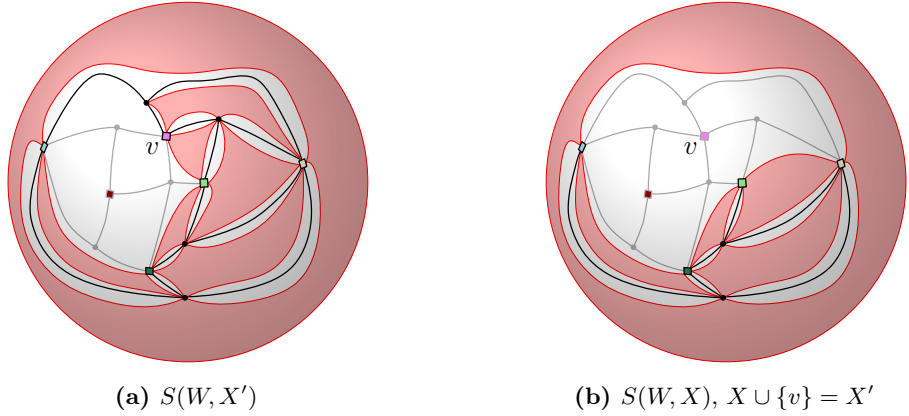

\centering
\begin{subfigure}{.3\textwidth}
\centering
\includegraphics[page=3,width=\textwidth]{figs/sphere_tikz}
\subcaption{$S(W,X')$\label{fig:sketched-a}}
\end{subfigure}\hfil
\begin{subfigure}{.3\textwidth}
\centering
\includegraphics[page=4,width=\textwidth]{figs/sphere_tikz}
\subcaption{$S(W,X)$, $X \cup \{v\} = X'$\label{fig:sketched-b}}
\end{subfigure}\hfil
\caption{(a) A \sketch $S(W,X')$ computed from the witness $W$ of \cref{fig:map} with respect to a bag $X'$ ($V_{X'}=V$). The anchor vertices of $X'$ are opaque, while the non-anchor vertices are faded. The active boundaries are red and the background of the active faces is light red. (b) A \sketch  $S(W,X)$, where $X \cup\{v\}=X'$ computed from $S(W,X')$ by applying the deletion operation  (\cref{se:algorithm}).\label{fig:sketched}}
\end{figure} 

We will exploit anchor vertices to reduce the size of $W_X$ from $O(|V_X|)$ to $O(\omega)$, by ``sketching'' parts of the embedding that are not relevant\footnote{In the database and data engineering fields, sketching algorithms form a powerful toolkit to compress data in a way that supports answering various queries~\cite{DBLP:journals/queue/Cormode17}. Our idea of sketching has some similarities with this concept but serves a different purpose.}. The idea of sketching an embedded graph is inspired by a previous work about orthogonal planarity~\cite{DBLP:journals/jcss/GiacomoLM22}; applying such idea to our problem requires the development of several new tools and concepts, described in  the remainder of this section (and partly in \cref{se:preliminaries}).  
A face $f$ of $W_X$ is \emph{active} either if its boundary contains only one vertex $v$ (which implies  $W_X=(\{v\},\emptyset)$) and $v$ is an anchor vertex, or if its boundary contains more vertices among which there are at least two anchor vertices; refer to \cref{fig:sketched-a}.   The  \emph{active boundary} of $f$ (red in \cref{fig:sketched-a}) is obtained by shortcutting all non-anchor vertices of $f$, where the \emph{shortcut operation} is defined as follows. 
 For a closed walk $\pi$ and a vertex $v$ in $\pi$, shortcutting $v$ consists of removing each occurrence of $v$ (if more than one), together with the edge $(u,v)$ that precedes it in $\pi$, and the edge $(v,u')$ that follows it in $\pi$, and of adding the edge $(u,u')$ between $u$ and $u'$ in $\pi$.
\begin{figure}[tb]
\centering
\includegraphics[page=1]{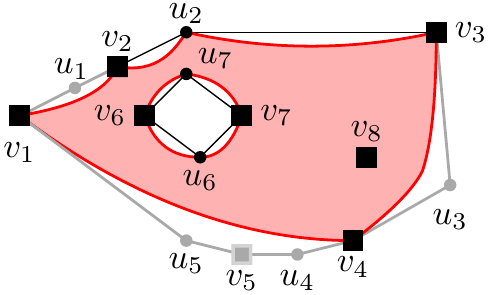}
\caption{An active boundary (red) made of three closed walks (edges are omitted): $\langle v_1, v_2, u_2, v_3, v_4,  v_1 \rangle$, $\langle u_6, v_6, u_7, v_7 \rangle$, $\langle v_8 \rangle$; vertices $u_1,u_3,u_4,v_5,u_5$ have been shortcut.\label{fig:disconnected}}
\end{figure}
\cref{fig:disconnected} illustrates a single face $f$ and the corresponding active boundary. 
 The \emph{embedding \sketch} (for short the \emph{\sketch}) \emph{of $W_X$ with respect to $X$}  is the embedded graph $S(W_X,X)$ formed by all the vertices and edges that belong to the active boundaries of $W_X$. For each active boundary $B_f$ of an active face $f$ of $W_X$, $S(W_X,X)$ has an \emph{active face}  $f^*$ (light red in \cref{fig:sketched-a,fig:disconnected}). Note that $S(W_X,X)$ also has faces that are not active (white in \cref{fig:sketched-a,fig:disconnected}). Also, the position system of $W_X$ yields a position system for $S(W_X,X)$, since if two closed walks of distinct components of $W_X$ were incident to the same active face $f$, then the two corresponding closed walks of $S(W_X,X)$ are also incident to the same active face $f^*$. However, $S(W_X,X)$ may not be bipartite any longer (as in \cref{fig:disconnected}) and it may contain multiple edges (but no self-loops). It is worth noting that the embedding \sketch of $W_X$ can be defined with respect to any bag $X'$ as long as $V_{X'}=V_X$ (see \cref{fig:sketched-a}). 
 
 We now further refine $S(W_X,X)$ to avoid active boundaries that are not useful for our purposes. Namely, an active boundary is \emph{non-extensible} if it consists of two homotopic parallel edges.  Given a witness $W$ of $G$, the \emph{restriction} of $W$ to $G_X$ is the \compact witness  $W[G_X]$ of $G_X$ obtained from $W$ by removing all the real vertices not in $G_X$, all the intersection vertices that are isolated (due to the removal of some real vertices) or inessential, as well as a vertex for each twin-pair until the graph contains none of them. \mbox{The next lemmas allow us to bound the size~of~a~\sketch.}

\begin{lemma}\label{le:no-parallel}
If $G$ is a map graph, then it admits a \compact witness $W$ with the following property. If $S(W[G_X],X)$ contains $h>1$ non-extensible active boundaries that~share the same pair of end-vertices, then the vertices of $W$ lie in at most one of these $h$ active~boundaries.
\end{lemma}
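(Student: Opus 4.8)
The plan is to take \emph{any} \compact witness of $G$ — one exists by \cref{th:char-compact} — and argue that a suitably extremal choice already enjoys the stated property. Concretely, among all \compact witnesses of $G$ I would pick one, $W$, with the fewest vertices, and among those one minimizing $\beta(W):=\sum_{X\in\mathcal X}\sum_{\{v,w\}}\beta_X(v,w)$, where $\beta_X(v,w)$ is the number of non-extensible active boundaries of $S(W[G_X],X)$ with end-vertices $\{v,w\}$ whose bounded disk contains at least one vertex of $W$ in its interior. Assume, for a contradiction, that this $W$ fails the statement: for some bag $X$ and some pair $\{v,w\}$ there are two non-extensible active boundaries $B_1,B_2$ of $S(W[G_X],X)$ with end-vertices $\{v,w\}$, bounding disks $R_1,R_2$ that each contain vertices of $W$. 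Note that $R_i$ is a face $f_i$ of $W[G_X]$ whose facial walk collapses, under shortcutting, to the two parallel edges of $B_i$, so that $v$ and $w$ are the only anchor vertices on $\partial f_i$, and every vertex of $W$ drawn inside $R_i$ is a real vertex of $V\setminus V_X$ or an intersection vertex removed when forming $W[G_X]$.

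The heart of the argument is the following separation fact: the part $N_i$ of $W$ drawn in the interior of $R_i$ is attached to the rest of $W$ only through $v$ and $w$, so it can be relocated freely. To establish this I would combine the separator property of the tree-decomposition — any intersection vertex of $W$ adjacent to a vertex of $V_X\setminus X$ has \emph{all} its neighbors in $V_X$, since otherwise $W^2[V]$ would contain a false adjacency across $X$ — with the fact that $W$ is \compact, to conclude that both the intersection vertices of $\partial f_i$ adjacent to $V_X\setminus X$ and the real vertices of $V_X\setminus X$ on $\partial f_i$ retain the same neighborhood in $W$ as in $W[G_X]$; hence no vertex of $N_i$ is joined to a non-anchor vertex of $\partial f_i$, and, $W$ being connected, $N_i$ meets the remainder of $W$ only at $v$ and/or $w$. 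It follows that $N_2$ can be re-embedded inside the disk $R_1$, disjointly from $N_1$ and still hanging off $v$ and $w$ exactly as before (say, routed alongside one boundary arc of $\partial R_1$); call the resulting embedded graph $W'$.

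It remains to contradict the extremal choice of $W$. Since we moved a subgraph attached only at $v,w$ into a disk whose boundary passes through $v$ and $w$, the move creates no crossings and no new adjacencies, so $W'$ is a planar bipartite graph with $(W')^2[V]=W^2[V]=G$ and the same vertex set as $W$; re-embedding also cannot create inessential vertices, as that notion is purely combinatorial. If $W'$ contains a twin-pair, deleting its redundant intersection vertex — which does not change the half-square — yields a \compact witness of $G$ with strictly fewer vertices, contradicting the minimality of $W$. Otherwise $W'$ is \compact with $|V(W')|=|V(W)|$, and since $W'[G_X]=W[G_X]$ as embedded graphs while $R_2$ now contains no vertex of $W'$, we have $\beta_X(v,w)$ decreased by one with all other summands at bag $X$ unchanged. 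To reach the desired contradiction $\beta(W')<\beta(W)$, one must verify that no summand $\beta_{X'}(\cdot,\cdot)$ at another bag $X'$ increases: the relocation changes the combinatorial embedding of $W$ only inside $\overline{R_1}\cup\overline{R_2}$, so for descendants $X'$ of $X$ one has $W'[G_{X'}]=W[G_{X'}]$ outright, and for the remaining bags one argues that the only effect of the move is to shift deleted-in-restriction material out of one bigon and into another already-occupied bigon, which cannot raise the count. This last bookkeeping — controlling how the embedding-dependent removal of twin-pairs inside the various restrictions interacts with the re-embedding, and using that the boundaries are \emph{non-extensible}, i.e.\ honest bigons with a canonical interior — is the step I expect to be the main obstacle, and it is where compactness of $W$ is used most heavily.
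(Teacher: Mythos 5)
Your core local move coincides with the paper's: for a non-extensible active boundary with end-vertices $v,w$, the portion of $W$ lying inside it is attached to the rest only through $v$ and $w$ (the separator property of the tree-decomposition forbids an intersection vertex from having neighbours both in $V_X\setminus X$ and in $V\setminus V_X$), so it can be re-embedded into another bigon with the same end-vertices. The paper's proof of \cref{le:no-parallel} stops essentially there: it takes an \emph{arbitrary} compact witness $\hat W$ and directly modifies its rotation system so that \emph{all} the subgraphs $H_1,\dots,H_h$ lie inside $B_1$, declaring the result to be the desired $W$; no extremal choice and no potential function are involved. Your wrapper is genuinely different — a minimum-vertex, minimum-$\beta$ counterexample — and it does buy something the paper glosses over (the possibility that the relocation creates a twin-pair is neatly absorbed by vertex-minimality, and it addresses the implicit quantification over all bags and all pairs simultaneously), but it also creates an obligation the paper never incurs: you must show that $\beta$ strictly decreases.

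That obligation is exactly where your argument has a genuine gap, and you flag it yourself. The claim that at bags other than $X$ "the only effect of the move is to shift deleted-in-restriction material" is false in general: for an ancestor bag $X''$ one has $V_{X''}\supseteq V_X$, so the relocated set $N_2$ (real vertices of $V\setminus V_X$ and intersection vertices discarded when forming $W[G_X]$) may contain vertices that \emph{do} belong to $W'[G_{X''}]$ and hence to its sketch; the move then changes the embedding, the faces, and the active boundaries of $S(W'[G_{X''}],X'')$ itself, and nothing in your argument rules out that a previously empty (or previously single) bigon at $X''$ becomes a second occupied one, i.e.\ that $\beta_{X''}$ increases. Even the descendant case is not "outright": the restriction operator removes one vertex per twin-pair, which is an embedding-dependent and non-canonical choice, and an intersection vertex of $N_2$ adjacent only to $v,w$ can survive the first stage of forming $W[G_{X'}]$, so $W'[G_{X'}]$ and $W[G_{X'}]$ need not coincide as embedded graphs without a further argument or a fixed canonical restriction. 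Until the non-increase of the other summands is actually established (or the potential is replaced by a quantity intrinsic to $W$ rather than a sum over bags), the contradiction $\beta(W')<\beta(W)$ is not available, so the proof is incomplete precisely at its load-bearing step; the paper sidesteps this by constructing the witness directly rather than by comparing measures.
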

\begin{proof}
\begin{figure}[htb]
\centering
\begin{subfigure}{.35\textwidth}
\centering
\includegraphics[page=1,width=\textwidth]{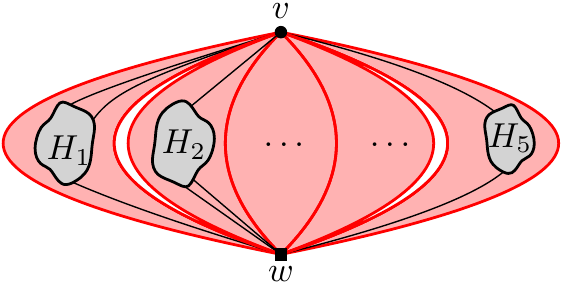}
\subcaption{$\hat{W}$\label{fi:no-parallel-1}}
\end{subfigure}\hfil
\begin{subfigure}{.35\textwidth}
\centering
\includegraphics[page=2,width=\textwidth]{figs/parallel}
\subcaption{$W$\label{fi:no-parallel-2}}
\end{subfigure}
\caption{Illustrations for the proof of \cref{le:no-parallel}. Modifying the rotation system of $\hat{W}$ such that each $H_i$ lies in $B_1$ and all other non-extensible active boundaries become empty.\label{fi:no-parallel}}
\end{figure}
Refer to \cref{fi:no-parallel}.
Let $\hat{W}$ be a \compact witness of $G$, and suppose  $S(\hat{W}[G_X],X)$ contains $h>1$ non-extensible active boundaries $B_1,B_2,\dots,B_h$ with common end-vertices $v,w$. Let $H_i$ be the subgraph of $W$ that lies inside $B_i$ (if any), for $1 \le i \le h$. Since each $B_i$ consists of two parallel edges, $v$ and $w$ separate $H_i$ and $S(W[G_X],X) \setminus H_i$. We obtain a new \compact witness $W$ of $G$ by~modifying~the~rotation~system~of~$\hat{W}$~so~that~each~$H_i$~lies~inside~$B_1$.\end{proof}

\begin{remark}\label{re:witness}
By \cref{le:no-parallel}, we assume in the following that for any \compact witness $W$ of~$G$ such that, for some $X \in \mathcal{X}$, the \sketch $S(W[G_X],X)$ contains $h>1$ non-extensible active boundaries, the vertices of $W$ lie in at most one of such active boundaries. Therefore, in $S(W[G_X],X)$,  we keep only one of the corresponding $h$ pairs of homotopic parallel edges.
\end{remark} 

\begin{lemma}\label{le:sketched-size}
A \sketch $S(W_X,X)$ contains $O(\omega)$ vertices and edges.
\end{lemma}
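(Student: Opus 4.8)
The plan is to bound the vertices first and then the edges, using in an essential way that $S(W_X,X)$ is built only from \emph{active} boundaries and that, after Remark~\ref{re:witness}, we keep at most one pair of homotopic parallel edges among any family of non-extensible active boundaries sharing the same end-vertices. First I would observe that every vertex of $S(W_X,X)$ is an anchor vertex of $W_X$, since the shortcut operation deletes precisely the non-anchor vertices from every active boundary. Anchor vertices come in two flavors: real vertices of the bag $X$, of which there are at most $\omega$; and intersection vertices all of whose neighbors lie in $X$. Thus it suffices to bound the number of intersection anchor vertices that actually survive into $S(W_X,X)$ (i.e., appear on some active boundary). Each such intersection vertex $u$ has $N(u)\subseteq X$, so $N(u)$ is one of at most $2^{\omega}$ subsets of $X$; but we can do much better by invoking compactness. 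Since $W_X$ is \compact, no two intersection vertices form a twin-pair and none is inessential, so for any fixed pair $\{v,w\}\subseteq X$ there is at most a constant number (in fact essentially one, up to the twin-pair exclusion) of degree-$2$ intersection vertices with neighborhood exactly $\{v,w\}$ appearing on active boundaries, giving $O(\omega^2)$ from degree-$2$ anchors; and the higher-degree intersection anchors are controlled by a planarity/Euler-type count on the subgraph of $W_X$ they induce together with $X$, which has $O(\omega)$ real vertices and hence $O(\omega)$ intersection vertices of degree $\ge 3$ (as in the proof of Lemma~\ref{le:size}). This already yields $O(\omega^2)$ vertices; I expect that a slightly more careful argument, charging each intersection anchor to an \emph{edge} of the active boundary it lies on rather than to a pair of bag vertices, brings this down to $O(\omega)$, which is the bound claimed.

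Next I would bound the edges. Each edge of $S(W_X,X)$ lies on the active boundary of some active face $f^*$, and an active boundary of $f$ is obtained by shortcutting a closed walk of $W_X$ down to its anchor vertices; between any two consecutive anchor vertices on that walk the shortcut produces exactly one edge of $S(W_X,X)$. So the number of edges of $S(W_X,X)$ equals the number of maximal anchor-free arcs along the boundaries of active faces of $W_X$, which is $O(1)$ times the number of (anchor vertex, active face) incidences plus the number of active faces that are traversed entirely by non-anchor vertices — but such walks shortcut to empty boundaries or to non-extensible ones, and the latter are pruned by Remark~\ref{re:witness}. Hence, up to additive $O(\omega)$ for the kept representative parallel edges, the edge count is $O(1)$ times the number of incidences between the $O(\omega)$ anchor vertices and the active faces around them. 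Around each anchor vertex the rotation system of $W_X$ orders its incident edges, and consecutive ones in that order bound a face; the key point is that between two consecutive \emph{anchor-incident} edges there is at most one active face (the intervening region shortcuts away), so the number of such incidences is at most the total degree of the anchor vertices in $S(W_X,X)$. Combining this with the vertex bound and the fact that $S(W_X,X)$ is a (multi)graph with $O(\omega)$ vertices and no self-loops, and with at most one representative per homotopy class of parallel edges after Remark~\ref{re:witness}, Euler's formula for the underlying simple planar graph gives $O(\omega)$ distinct edge-classes, and re-adding the bounded multiplicities keeps the total at $O(\omega)$.

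The main obstacle, and the step that needs the most care, is the vertex bound for intersection anchor vertices: a priori there is nothing stopping many degree-$2$ intersection vertices with distinct pairs of neighbors in $X$ from all surviving on active boundaries, which only gives $O(\omega^2)$. To get the sharp $O(\omega)$ I would argue that each such surviving intersection anchor vertex can be injectively charged to a distinct edge of some active boundary (the edge created when shortcutting does \emph{not} erase $u$, because $u$ is itself an anchor), and then bound the number of active-boundary edges directly by the planar/Euler argument sketched above applied to the graph obtained from $S(W_X,X)$ by suppressing degree-$2$ vertices; this graph has $O(\omega)$ vertices of degree $\ge 3$ (all real bag vertices or high-degree intersection anchors) and, thanks to compactness and Remark~\ref{re:witness}, no homotopic parallel edges, so it has $O(\omega)$ edges, and hence $S(W_X,X)$ has $O(\omega)$ vertices and edges in total. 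The remaining steps — that shortcutting preserves planarity and the position system (already noted in the text preceding the lemma), and that every $S(W_X,X)$ vertex is an anchor vertex — are immediate from the definitions.
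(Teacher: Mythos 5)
Your plan follows the same route as the paper's proof: reduce the vertex count to the Euler-type argument of \cref{le:size} applied to the anchor intersection vertices, and bound the edges via Euler's formula for embedded multigraphs without homotopic parallel edges together with \cref{re:witness}. You also correctly isolate the one non-routine step, namely the degree-$2$ anchor intersection vertices, but the fix you propose for it does not go through. It hinges on the claim that suppressing degree-$2$ vertices of $S(W_X,X)$ yields a multigraph with no homotopic parallel edges ``thanks to compactness and \cref{re:witness}''. Compactness forbids twin-pairs \emph{in $W_X$}: two degree-$2$ intersection vertices with common neighborhood $\{v,w\}$ may fail to form a twin-pair simply because the face of $W_X$ between them carries non-anchor material (real vertices of $V_X\setminus X$ and their private intersection vertices). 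That material is exactly what the shortcut operation erases, so in the sketch the two vertices bound a short active boundary and, once suppressed, become homotopic parallel edges after all; \cref{re:witness} does not help, because an active boundary passing through two intersection vertices has four edges and is therefore not ``two homotopic parallel edges'', i.e., not non-extensible. Concretely, let $G_X$ consist of $m$ triangles $v,w,x_i$ glued along the edge $(v,w)$ and let $X=\{v,w\}$: the witness with degree-$2$ intersection vertices $u_1,\dots,u_m$ realizing $(v,w)$, interleaved with degree-$2$ intersection vertices realizing $(v,x_i)$ and $(w,x_i)$ inside each lens, is compact (no neighborhood strictly contains another, and no face incident to two consecutive $u_i$'s is a quadrilateral), yet every $u_i$ is an anchor lying on an active boundary, so all $m$ of them survive into $S(W_X,X)$. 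Your fallback claim that each pair $\{v,w\}\subseteq X$ supports only a constant number of degree-$2$ anchors fails for the same reason.

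To be fair, this is precisely the point at which the paper's own proof is also only a pointer to ``a similar argument as in the proof of \cref{le:size}'', and that argument controls homotopy in the embedding of all of $W_X$ (on $|V_X|$ vertices), not in the subgraph spanned by $X$ and its anchor intersection vertices; applied verbatim it yields $O(|V_X|)$ rather than $O(\omega)$. Closing the gap seems to require an additional normalization of the witness in the spirit of \cref{le:no-parallel}---rerouting the contents of the lenses so that redundant degree-$2$ anchors with a common neighborhood become genuine twin-pairs and are eliminated by compactness---or an explicit extra pruning rule for the sketch. As written, your argument does not establish the claimed $O(\omega)$ bound on the vertices, and your edge bound inherits the problem because it presupposes the vertex bound.
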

\begin{proof}
With a similar argument as in the proof of \cref{le:size} we can show that, in $W_X$, each real vertex in $X$ is adjacent to $O(\omega)$ intersection vertices that are anchor vertices. Therefore, $S(W_X,X)$ contains $O(\omega)$ vertices in total.  Concerning the number of edges, since $S(W_X,X)$ is embedded on the sphere, it contains $O(\omega)$ edges such that each pair of edges is either non-parallel or non-homotopic parallel. In addition, since each of these edges participates in at most one homotopic pair by \cref{re:witness}, it follows that $S(W_X,X)$ contains $O(\omega)$ edges.
\end{proof}

\noindent We now exploit the concept of \sketch to define an equivalence relation among witnesses.

\begin{definition}\label{def:x-equiv}
	Two \compact witnesses $W_X$ and $W'_X$ of $G_X$ are \emph{$X$-equivalent}  if they have the same \sketch with respect to $X$, i.e., $S(W_X,X)=S(W'_X,X)$.
\end{definition}

\noindent The next lemma deals with the size of the quotient of such a relation.

\begin{lemma}\label{lem:number}
The $X$-equivalence relation yields $\omega^{O(\omega)}$ classes for the \compact witnesses of $G_X$.
\end{lemma}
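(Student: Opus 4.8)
The plan is to note, via \cref{def:x-equiv}, that the assignment $W_X\mapsto S(W_X,X)$ induces a bijection between the $X$-equivalence classes of \compact witnesses of $G_X$ and the set of embedded graphs that actually occur as such a \sketch; hence it suffices to prove that there are only $\omega^{O(\omega)}$ possibilities for $S(W_X,X)$. By \cref{le:sketched-size}, any such \sketch is an embedded multigraph with $\nu=O(\omega)$ vertices and $\mu=O(\omega)$ edges and no self-loops, whose real vertices form a subset of the bag $X$ and whose remaining vertices are (anonymous) intersection vertices, together with a combinatorial embedding (a rotation system and a position system) and a designation of which faces are active. I would bound the number of choices for each of these ingredients and show that the product is $\omega^{O(\omega)}$.

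For the underlying abstract multigraph: there are at most $2^{\omega}$ choices for the subset $R\subseteq X$ of real vertices, at most $\nu=O(\omega)$ choices for the number $j$ of intersection vertices (which we may fix once and for all as $\{1,\dots,j\}$), and at most $\big((|R|+j)^2\big)^{\mu}=\omega^{O(\omega)}$ choices for the multiset of $\mu$ edges, each an unordered pair of vertices. Given the multigraph, a rotation system amounts to a cyclic order of the incident edge-ends at each vertex, contributing at most $\prod_{v}\de{v}!\le\big(\sum_{v}\de{v}\big)!=(2\mu)!=\omega^{O(\omega)}$ possibilities; here it is important to use $a!\,b!\le(a+b)!$, since a single real vertex of $X$ may have degree $\Theta(\omega)$ and bounding the product factorwise would be too lossy. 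Finally, Euler's formula gives $O(\omega)$ faces, so the designation of active faces costs only a further factor $2^{O(\omega)}$.

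The one ingredient that requires care is the position system, since treating the component/face incidence relation as an arbitrary subset of a ground set of size $\Theta(\omega^2)$ would give a useless $2^{\Theta(\omega^2)}$ bound. Here I would exploit that the \sketch is embedded on the sphere and that it has at most $\omega$ connected components (each component of $G_X$, hence of $W_X$, contains a vertex of $X$, as observed earlier). Once the rotation system is fixed, the facial walks of every individual component are determined, and the position system is then equivalent to describing how the components nest inside one another's faces: this data is captured by a rooted forest on the $O(\omega)$ components, plus, for each non-root component, the choice of one of the faces of its parent in which it sits---and the total number of faces summed over all components drawn separately is $\mu-\nu+2c=O(\omega)$. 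There are at most $(\omega+1)^{\omega-1}$ such forests and at most $(O(\omega))^{\omega}$ ways to resolve the per-component face choices, so the position system contributes only $\omega^{O(\omega)}$. Multiplying these bounds yields $\omega^{O(\omega)}$ classes.

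I expect the position-system step to be the main obstacle: one must genuinely use sphericity (and the $O(\omega)$ bound on the number of components) to keep that count at $\omega^{O(\omega)}$ rather than $2^{\Theta(\omega^2)}$, whereas all the other ingredients are routine counting once \cref{le:sketched-size} is available.
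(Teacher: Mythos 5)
Your proposal is correct and follows essentially the same route as the paper: it bounds the number of possible \sketches by counting the underlying abstract graphs, the rotation systems, and the position systems, each contributing $\omega^{O(\omega)}$ (the paper handles the position system more crudely, by assigning each of the at most $\omega$ closed walks to one of the at most $\omega$ faces, which your nesting-forest encoding refines but does not change in substance). One side remark of yours is inaccurate though harmless: bounding the rotation count factorwise via $\prod_{v}\de{v}!\le\prod_{v}\de{v}^{\de{v}}\le\omega^{O(\sum_{v}\de{v})}=\omega^{O(\omega)}$ is not lossy, since the total degree is $O(\omega)$ by \cref{le:sketched-size}; this is exactly what the paper does, and your $(2\mu)!$ bound is an equally valid alternative.
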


\begin{proof}
Let $n_1$ be the number of possible (abstract) graphs that can be obtained from the real vertices of $X$ and all possible sets of intersection vertices. For each such graph, let $n_2$ be the maximum number of possible rotation and position systems that it can have. It follows that the number of $X$-equivalent classes is upper bounded by the product of $n_1$ and $n_2$.

Given the set $X$ of real vertices and a \compact witness $W_X$ of $G_X$,  any \sketch $S(W_X,X)$ contains $O(\omega)$ intersection vertices, as otherwise $W_X$ would contain inessential intersection vertices or twin-pairs. Since each intersection vertex is adjacent to a set of at most $\omega$ real vertices, we can bound the number $n_{\textrm{int}}$ of possible sets of intersection vertices by $a \cdot \sum_{i=2}^\omega {\omega \choose i} < a \cdot 2^\omega$, where $a$ is the maximum number of intersection vertices in any \sketch that have the same set of neighbors. Since $a \in O(\omega)$, we have that $n_{\textrm{int}} \in 2^{O(\omega)}$. Let $I_X$ be one of the $n_{\textrm{int}}$ possible sets of intersection vertices. The number $n_{\textrm{abs}}$ of distinct abstract graphs with vertex set $X \cup I_X$ can be upper bounded by the number of possible neighborhoods of a real vertex combined for all real vertices, that is 

$$n_{\textrm{abs}} \le \prod_{v \in X} \omega ^{\textrm{deg}(v)} = \omega^{\sum_{v \in X}{\textrm{deg}(v)}} \le \omega^{O(\omega)}$$

\noindent holds, which yields $n_1 \le n_{\textrm{int}} \cdot n_{\textrm{abs}} \in \omega^{O(\omega)}$.
	
For a fixed graph $S$,  the number of possible rotation systems $n_{\textrm{rot}}$ is upper bounded by the number of possible permutations of edges around each vertex.  Thus we have 

$$n_{\textrm{rot}} \le \prod_{v \in S} {\textrm{deg}(v)!} < \prod_{v \in S} {\textrm{deg}(v)^{\textrm{deg}(v)}} \le \omega^{O\left(\sum_{v \in S}{\textrm{deg}(v)}\right)} \le \omega^{O(\omega)}.$$ 

\noindent Each rotation system of $S$ fixes the closed walk of each face of each connected component of $S$. Since $S$ contains, over all its connected components, at most $\omega$ closed walks (at most one for each real vertex in $X$) and hence at most $\omega$ faces, for the number $n_{\textrm{pos}}$ of possible position systems it holds $n_{\textrm{pos}} \le \omega^\omega$. Therefore we have $n_2 \leq n_{\textrm{rot}} \cdot n_{\textrm{pos}} \in \omega^{O(\omega)}$, which yields $n_1 \cdot n_2 \in \omega^{O(\omega)}$, as desired.
\end{proof}

\section{Algorithmic Framework}\label{se:algorithm}
Let $G=(V,E)$ be an input graph, let $k$ be an integer, and let $(\mathcal{X}, T)$ be a nice tree-decomposition of $G$ of width $t=\omega-1$. We present an algorithmic framework to test whether $G$ is a $k$-map graph or a hole-free $k$-map graph. Namely, we traverse $T$ bottom-up and equip each bag $X \in \mathcal{X}$ with a suitably defined set of \sketches, called \emph{record $R_X$}. The framework can be tailored by imposing different properties for the records. The next three properties are rather general; the first two  are useful to prove the correctness of our approach, as shown in \cref{th:correct}, whereas the third property comes into play when dealing with the efficiency of the approach, and in particular in \cref{le:rx}.

\begin{definition}\label{def:feasible}
The record $R_X$ is \emph{\feasible} if the following properties hold: 
\begin{enumerate}[{F}1]
    \item \label{f1} For every \compact witness $W_X$ of $G_X$, $R_X$ contains its \sketch $S(W_X,X)$. 
    \item \label{f2} For every entry $r \in R_X$, there is a \compact witness $W_X$ of $G_X$ such~that~$r=S(W_X,X)$.
    \item \label{f3} $R_X$ contains no duplicates.
\end{enumerate}
\end{definition}

\begin{lemma}\label{le:rx}
For every $X \in \mathcal{X}$, if $R_X$ is \feasible,  it contains $\omega^{O(\omega)}$ entries, each of size $O(\omega)$.
\end{lemma}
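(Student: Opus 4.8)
The plan is to derive \cref{le:rx} directly from the structural results of \cref{se:sketches}; no new construction is needed. First I would pin down the cardinality of $R_X$. By property F1 every sketch of a compact witness of $G_X$ is an entry of $R_X$, by property F2 every entry is such a sketch, and by property F3 there are no repetitions; hence the entries of $R_X$ are in bijection with the set of \emph{distinct} embedding sketches $S(W_X,X)$ taken over all compact witnesses $W_X$ of $G_X$. By \cref{def:x-equiv}, $S(W_X,X)=S(W'_X,X)$ exactly when $W_X$ and $W'_X$ are $X$-equivalent, so $|R_X|$ equals the number of $X$-equivalence classes of compact witnesses of $G_X$ — and this number is $\omega^{O(\omega)}$ by \cref{lem:number}. (If $G_X$ admits no compact witness, then $R_X=\emptyset$ and the claim is vacuous.) This already gives the bound on the number of entries.

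Next I would bound the size of a single entry. Take any $r\in R_X$; by F2, $r=S(W_X,X)$ for some compact witness $W_X$ of $G_X$, so \cref{le:sketched-size} applies and $r$ has $O(\omega)$ vertices and $O(\omega)$ edges. To store $r$ we additionally need its combinatorial embedding, i.e., its rotation and position systems: the rotation system has size proportional to the sum of vertex degrees, which is $O(\omega)$, while the position system is determined by the incidences between the $O(\omega)$ closed walks (equivalently, the $O(\omega)$ faces, as counted in the proof of \cref{lem:number}) of the embedding. Altogether each entry can be stored in $O(\omega)$ space, which is the desired size bound.

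I do not expect a genuine obstacle here: the lemma is essentially a bookkeeping corollary of \cref{lem:number} and \cref{le:sketched-size}. The one point that warrants care is that ``the'' sketch $S(W_X,X)$ must be a well-defined object for the bijection in the first paragraph to make sense; this relies on the normalization in \cref{re:witness} (keeping a single representative among the $h>1$ homotopic pairs of parallel edges coming from non-extensible active boundaries), which is already folded into the definition of a sketch and into the counting of \cref{lem:number} and \cref{le:sketched-size}. With that convention in place, the map from $X$-equivalence classes to entries of $R_X$ is well-defined and injective, and the two bounds above yield the statement.
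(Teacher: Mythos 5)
Your proposal is correct and follows essentially the same route as the paper: F\ref{f1}--F\ref{f3} identify the entries of $R_X$ with the distinct \sketches (i.e., the $X$-equivalence classes), so \cref{lem:number} bounds their number by $\omega^{O(\omega)}$, and \cref{le:sketched-size} bounds each entry's size by $O(\omega)$. The extra remarks on storing the rotation/position systems and on the normalization of \cref{re:witness} are fine but not needed beyond what the paper already assumes.
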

\begin{proof}
By F\ref{f1}--F\ref{f3}, the entries of $R_X$ are all and only the possible \sketches of $W_X$ and are all distinct. Hence, $|R_X| \in \omega^{O(\omega)}$ by \cref{lem:number}. Each \sketch has size $O(\omega)$ by \cref{le:sketched-size}.
\end{proof}

We now describe the additional properties that we incorporate in the framework. In order to verify that $G$ admits a  $k$-map we exploit \cref{pr:kmapdegree}, which translates into verifying that, for each \sketch, the degree of any intersection vertex is at most $k$. 

\begin{definition}\label{def:k-feasible}
A record $R_X$ is \emph{$k$-map \feasible} if it is \feasible and it contains a non-empty subset $R^*_X \subseteq R_X$, called \emph{subrecord}, for which the following additional property holds: 
\begin{enumerate}[{F}1]\setcounter{enumi}{3}
    \item \label{f4}  For every entry $r \in R_X$, it holds $r \in R^*_X$ if and only if $r$ contains no intersection vertex $u$ with  $\de{u}>k$.
\end{enumerate}
\end{definition}

\noindent It is worth observing that, since an intersection vertex of degree $k$ implies the existence of a clique of size~$k$ in the input graph $G$, property F\ref{f4} is trivially verified when $k \ge \omega$. On the other hand, the size of the largest clique of a $k$-map graph is $\lfloor 3k/2 \rfloor$ (see, e.g.,~\cite{DBLP:journals/jacm/ChenGP02}). 

\smallskip

To check whether $G$ has a hole-free $k$-map,  we exploit \cref{th:holefree}. Namely, consider a \sketch $S(W_X,X)$ and an active boundary $B_f$ of $S(W_X,X)$. Let $f$ be the active face of $W_X$ corresponding to $B_f$. Note that any edge~$e$ that is part of $B_f$ represents a subsequence~$\pi_e$ of a closed walk $\pi$ in the boundary of $f$. Therefore, to control the number of edges on the boundary of each face~of~$W_X$, for every edge $e$ that is part of an active boundary of $S(W_X,X)$ we also store a counter $c(e) \ge 1$ , which represents the number of edges~in~$\pi_e$. If there is an edge $e$ such that $c(e) > 4$, then $G$ does not admit a \compact witness $W$ that is a quadrangulation and such that $W_X=W[G_X]$; hence we can avoid storing counters greater than $4$. Moreover, for any face $f$ of a compact witness $W$ of $G$, we know there exist two bags $\hat{X'}$ and $\hat{X}$ in $T$ such that $\hat{X'}$ is the child of $\hat{X}$, $\hat{X}$ is a forget bag, the active boundary representing $f$ in $\hat{X'}$  has more than one anchor vertex, while the one in $\hat{X}$ has only one anchor vertex (and hence is not part of $S(W_{\hat{X}},\hat{X})$). We call such an active boundary \emph{complete}~in~$\hat{X'}$, as it will not be modified anymore by the algorithm. As such, for each complete active boundary, the sum of the counters of its edges in $S(W_{\hat{X'}},\hat{X'})$ must be exactly~4, otherwise~$G$ does not admit a \compact witness $W$ that is a biconnected  quadrangulation such~that~$W_{\hat{X}}=W[G_{\hat{X}}]$.

\begin{definition}\label{def:hf-feasible}
A record $R_X$ is \emph{hole-free \feasible} if it is \feasible and it contains a non-empty subset $R^\circ_X \subseteq R_X$, called \emph{subrecord}, for which the following additional property holds:  
\begin{enumerate}[{F}1]\setcounter{enumi}{4}
    \item \label{f5} For every entry $r \in R_X$, it holds $r \in R^\circ_X$ if and only if $r$ contains no intersection vertex~$u$ with  $\de{u}>k$ and each complete active boundary of $r$ (if any) is such that its edge counters sum~up~to~4.
\end{enumerate}
\end{definition}

Each leaf bag contains only one vertex~$v$, thus its record consists of one \sketch with only one active face whose active boundary is $\langle v \rangle$. Such a  record can be computed in $O(1)$ time and it is trivially \feasible. Also, it is hole-free (and hence $k$-map) \feasible, as its unique active boundary is not complete. The next three operations are performed on a non-leaf bag $X$ of $T$, based on the type of $X$, to compute a $k$-map or hole-free \feasible record $R_X$, if any.

\medskip\noindent\textbf{Deletion operation.} 
Let $X$ be a forget bag whose child $X'$ in $T$ has a $k$-map (hole-free) \feasible record $R_{X'}$. Let $v$ be the vertex forgotten by $X$. We generate $R_X$ from $R_{X'}$ as follows. 

For a fixed \sketch $S(W_{X'},X')$ of $R_{X'}$, let $N_I(v) \subseteq N(v)$ be the set of intersection vertices adjacent to $v$ in $S(W_{X'},X')$. Since $v$ is forgotten by $X$, all its neighbors have already been processed, thus no vertex in $N_I(v)$ can  connect vertices that will be introduced by bags visited after $X$. Therefore, for every vertex $y \in N_I(v) \cup \{v\}$ and for every  \sketch $S(W_{X'},X')$ of $R_{X'}$, we apply a \emph{deletion operation}, which consists of updating each active boundary~$B_f$ of $S(W_{X'},X')$ containing $y$; see \cref{fig:sketched-b}. Namely, let $B_f$ be one of these active boundaries, we distinguish two cases based on whether $B_f$ contains only $y$ or it contains further vertices. 
Let $\pi_y$ be the closed walk of $B_f$ that contains all occurrences of $y$ (there might be more than one). If $B_f$ contains only $y$, we remove $\pi_y$ (and hence the whole active boundary $B_f$) from $S(W_{X'},X')$. If $B_f$ contains further vertices, we shortcut every occurrence of $y$ in $\pi_y$.  Also for each edge~$e$ introduced to shortcut $y$ such that~$e$ replaces edges $e_1$ and $e_2$ of $\pi_y$, we set $c(e)=c(e_1)+c(e_2)$.  Observe that, if $y$ has only one neighbor~$u$~in~$\pi_y$, this procedure creates a self-loop at~$u$, which we remove. If this procedure generates more than one pair of homotopic parallel edges with the same pair of end-vertices, then we keep only one such pair. Once all active boundaries have been updated, the resulting embedded graph is stored in~$R_X$. After each \sketch of $R_{X'}$ has been processed, we might have produced the same embedded graph for $R_X$ from two distinct \sketches of $R_{X'}$; in this case we keep only one copy.

\medskip\noindent\textbf{Addition operation.} 
Let $X$ be an introduce bag whose child $X'$ in $T$ has a $k$-map (hole-free) \feasible record $R_{X'}$.
Let $v$ be the vertex introduced by $X$ and $N_X(v) \subseteq N(v)$ be the set of vertices that are neighbors of $v$ and belong to $X$. We generate $R_X$ from $R_{X'}$ with the following \emph{addition operation}. For each \sketch $S(W_{X'},X')$ of $R_{X'}$, the high-level idea is to exhaustively generate all possible embedded graphs that can be obtained by introducing $v$ in $S(W_{X'},X')$. We distinguish two cases. 

\smallskip\noindent\textbf{Case 1:} $N_X(v) = \emptyset$. For each active boundary $B_f$ of $S(W_{X'},X')$, we generate a new embedded graph by adding the closed walk $\langle v \rangle$ to $B_f$. 

\smallskip\noindent\textbf{Case 2:} $N_X(v) \neq \emptyset$. We look for a face $f^*$ of  $S(W_{X'},X')$ that contains all the vertices of $N_X(v)$ on its active boundary $B_f$ (which may consist of multiple closed walks). If such a face does not exist, we  discard $S(W_{X'},X')$.  Else, for each such face, we generate a set of entries $E_{f^*}$ as follows. Intuitively, we will insert $v$ inside $f^*$ and generate one entry of $E_{f^*}$ for each possible way in which $v$ can be connected to its neighbors. Namely, we can connect $v$ to its neighbors by means of different intersection vertices and by realizing different permutations of the edges around $v$ and around those neighbors that appear multiple times along some closed walk of $B_f$; refer to \cref{fig:sketched-2} for an illustration. Concerning the intersection vertices, we can use those that already belong to $B_f$ and are adjacent only to vertices in $N_X(v)$, as well as we can create new ones. We note that since $v$ has at most $\omega-1$ neighbors in $N_X(v)$, there are $\sum_{i=1}^{\omega-1} {\omega-1 \choose i} = 2^{\omega-1}$ possible combinations of intersection vertices (see also the proof of \cref{lem:number}). This is done avoiding inessential intersection vertices and twin-pairs. For each choice of intersection vertices, since the degree of a vertex is $O(\omega)$, there are $\omega^{O(\omega)}$ distinct rotation systems to consider. Additionally, if $B_f$ consists of multiple closed walks, we shall consider all possible permutations of the edges around $v$ that do not cause edge crossings (i.e., any edge permutation in which there are no four edges $e_1,e_2,e_3,e_4$ in this order around $v$, such that $e_1,e_3$ connect $v$ to the vertices of a closed walk $\pi$ and $e_2,e_4$ connect $v$ to the vertices of a closed walk $\pi'$ with $\pi \neq \pi'$), and we consider each of them independently as a new embedded graph. Based on the fixed intersection vertices and rotation system, if the insertion of $v$ does not split $f^*$ into multiple faces, we can suitably update~$B_f$, otherwise we can generate the new active boundaries that appear in place of $B_f$; see in particular \cref{fig:sketched-g}. Also, for each newly introduced edge $e$ in a closed walk, we set $c(e)=1$.

\begin{figure}[tb]
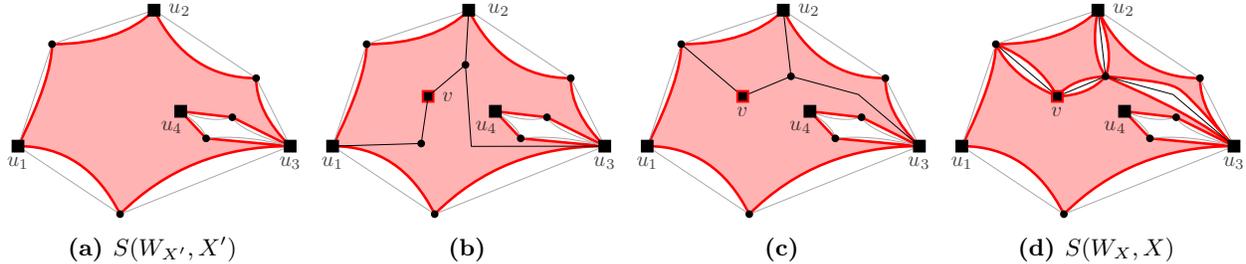

\centering
\begin{subfigure}{.24\textwidth}
\centering
\includegraphics[page=6,width=\textwidth]{figs/compact-map}
\subcaption{$S(W_{X'},X')$\label{fig:sketched-c}}
\end{subfigure}\hfill
\begin{subfigure}{.24\textwidth}
\centering
\includegraphics[page=8,width=\textwidth]{figs/compact-map}
\subcaption{\label{fig:sketched-e}}
\end{subfigure}\hfill
\begin{subfigure}{.24\textwidth}
\centering
\includegraphics[page=9,width=\textwidth]{figs/compact-map}
\subcaption{\label{fig:sketched-f}}
\end{subfigure}\hfill
\begin{subfigure}{.24\textwidth}
\centering
\includegraphics[page=11,width=\textwidth]{figs/compact-map}
\subcaption{$S(W_X,X)$\label{fig:sketched-g}}
\end{subfigure}
\caption{Illustration for the addition of vertex $v$. (a) Details of a face of $S(W_{X'},X')$ that contains all the neighbors of  $v$.  (b--c) Two distinct embedded graphs computed from $S(W_{X'},X')$ by introducing vertex~$v$ in different ways. (d) The \sketch $S(W_X,X)$ obtained by replacing the active boundary of the red face  with the new active boundaries corresponding to the three newly created active faces in (c).
\label{fig:sketched-2}}
\end{figure} 

\medskip\noindent\textbf{Merge operation.} 
Let $X$ be a join bag whose children $X_1$ and $X_2$ in $T$ have $k$-map (hole-free) \feasible records $R_{X_1}$ and $R_{X_2}$, respectively.
We generate $R_X$ from $R_{X_1}$ and $R_{X_2}$. Since $X$ is a join bag, $X$, $X_1$, and $X_2$ contain the same vertices, whereas $G_{X_1}$ and $G_{X_2}$ only share the vertices in $X$. Consider any pair of \sketches $S(W_{X_1},X)$ of $R_{X_1}$ and $S(W_{X_2},X)$ of $R_{X_2}$. Such \sketches share the same set of real vertices, whereas they may have different sets of intersection vertices and different combinatorial embeddings. At high-level, we aim at combining  $S(W_{X_1},X)$ and $S(W_{X_2},X)$ in all possible ways, provided that the original rotation and position systems of each \sketch are preserved and that we never insert a subgraph of one \sketch into a non-active face of the other. In practice, we apply the \emph{merge operation}, consisting of the next steps. 

\begin{enumerate}[(S.1)]
\item\label{s:1} We compute all possible unions of the two abstract graphs underlying the two \sketches. Namely, let $I_{X_1}$ and $I_{X_2}$ be the sets of intersection vertices of $S(W_{X_1},X)$ and $S(W_{X_2},X)$, respectively. We identify each pair of real vertices the two \sketches share, and we consider all possible abstract graphs whose set of intersection vertices $I_X$ is such that: (a) $I_X \subseteq I_{X_1} \cup I_{X_2}$; (b) for each intersection vertex of $I_{X_1}$ there is an intersection vertex in $I_X$ with the same set of neighbors, and the same holds for $I_{X_2}$.

\item\label{s:2} For each generated graph $S^*$, we compute all combinatorial embeddings, i.e., all possible rotation and position systems yielding a topological embedding on the sphere of $S^*$. If no such  combinatorial embeddings exist, we discard $S^*$, else we go to the next step.

\item\label{s:3} We generate all possible one-to-one mappings $\phi_1$ between intersection vertices of $S^*$ and of $S(W_{X_1},X)$, and all possible one-to-one mappings $\phi_2$ between intersection vertices of $S^*$ and of $S(W_{X_2},X)$. 

\item\label{s:4} We check, for each pair $\phi_1,\phi_2$, that the restriction of the resulting embedded graph on the real vertices, intersection vertices (up to the mapping defined by $\phi_1$ and $\phi_2$) and edges of each of the two \sketches preserves the corresponding rotation and position systems. If so, we go to the next step; otherwise, we discard the candidate solution.

\item\label{s:5} Since the previous step guaranteed that the active boundaries of each \sketch are preserved when looking at the corresponding restriction, we can verify that there is no subgraph of one \sketch inside a non-active face of the other.

\item\label{s:6} We suitably update the active boundaries of the resulting embedded graph and we add it to $R_X$. More precisely, the boundary of a face is active if it does not correspond to a non-active boundary in any of the two \sketches and it contains either exactly one anchor vertex or at least two anchor vertices.

\item\label{s:7} We remove inessential intersection vertices and iteratively one intersection vertex for each twin-pair, until there are no twin-pairs. 

\item\label{s:8} Once all pairs of \sketches have been processed, we remove possible~duplicates.
\end{enumerate}

\medskip

This concludes the description of the main algorithmic steps for proving \cref{th:main}. Next, we provide lemmas to establish the correctness and the time complexity of these steps.

\begin{lemma}\label{lem:forget}
Let $X$ be a forget bag whose child $X'$ in $T$ has a $k$-map (resp.\ hole-free) \feasible record $R_{X'}$. 
The algorithm either rejects the instance or computes a $k$-map (resp.\ hole-free) \feasible record $R_X$ of~$X$~in~$\omega^{O(\omega)}$~time.
\end{lemma}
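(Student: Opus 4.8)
The plan is to prove \cref{lem:forget} in two parts: first establishing that, when the algorithm does not reject, the record $R_X$ it produces is $k$-map (resp.\ hole-free) \feasible, and second bounding the running time of the deletion operation by $\omega^{O(\omega)}$. For the correctness of feasibility, I would argue the three properties F1--F3 in turn. For F1, take any \compact witness $W_X$ of $G_X$; since $v$ is forgotten at $X$, all neighbors of $v$ lie in bags of $T_{X'}$, so $W_X$ restricted to $G_{X'}$ (i.e.\ $W_X[G_{X'}]$) is a \compact witness $W_{X'}$ of $G_{X'}$, and by F1 applied to $R_{X'}$ its \sketch $S(W_{X'},X')$ lies in $R_{X'}$. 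The key observation is that applying the deletion operation to $S(W_{X'},X')$ yields exactly $S(W_X,X)$: the anchor vertices of $X$ are precisely the anchor vertices of $X'$ minus $v$ and minus those intersection vertices in $N_I(v)$ that had $v$ as their only out-of-$X$-preventing neighbor, and shortcutting these vertices from the active boundaries of $S(W_{X'},X')$ produces the active boundaries of $S(W_X,X)$ by definition of \sketch (here we also use \cref{le:no-parallel}/\cref{re:witness} to justify collapsing extra homotopic pairs, and the self-loop removal mirrors the shortcut definition for a closed walk in which a vertex has a single neighbor). For F2, each entry of $R_X$ is obtained by a deletion operation from some $r' = S(W_{X'},X') \in R_{X'}$, which by F2 for $R_{X'}$ equals $S(W_{X'},X')$ for a \compact witness $W_{X'}$ of $G_{X'}$; since $G_X = G_{X'}$ (a forget bag adds no edges), $W_{X'}$ is already a \compact witness of $G_X$, and the argument above shows the deletion operation turns $S(W_{X'},X')$ into $S(W_{X'},X)$, which is thus a valid \sketch. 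Property F3 holds because the last step of the deletion operation explicitly removes duplicates.

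For the subrecord properties F4 (resp.\ F5), I would note that deletion only shortcuts or deletes vertices and never increases the degree of any surviving intersection vertex, so the set of \sketches with no intersection vertex of degree exceeding $k$ is closed under the operation; combined with F4 for $R_{X'}$ this yields F4 for $R_X$ after restricting to the image of $R^*_{X'}$. For F5 one additionally tracks the edge counters: the rule $c(e) = c(e_1) + c(e_2)$ when $e$ replaces $e_1,e_2$ exactly preserves the invariant that $c(e)$ counts the edges of the closed-walk subsequence $\pi_e$; a complete active boundary in $X$ is one that becomes inactive (drops to one anchor vertex) at the very next forget step, and the condition that its counters sum to $4$ is unchanged by shortcutting (which only regroups counter contributions). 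One subtlety worth spelling out: a forget step may be the one that renders an active boundary complete, so the operation must check, for each active boundary that loses an anchor vertex and drops to one anchor, whether its counter-sum equals $4$ and, if not, not place the resulting \sketch in $R^\circ_X$ — this is exactly what \cref{def:hf-feasible} requires, and it is consistent with the discussion preceding it identifying the child-of-forget bag $\hat{X'}$ as the place where completeness is certified.

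For the running time, by \cref{le:rx} the record $R_{X'}$ has $\omega^{O(\omega)}$ entries, each of size $O(\omega)$. For a fixed \sketch $S(W_{X'},X')$, the set $N_I(v) \cup \{v\}$ has $O(\omega)$ vertices, and for each we shortcut its (at most $O(\omega)$) occurrences across the (at most $O(\omega)$) active boundaries, updating counters and removing self-loops and redundant homotopic pairs; all of this is $\mathrm{poly}(\omega)$ work per \sketch. Hence processing all entries of $R_{X'}$ costs $\omega^{O(\omega)} \cdot \mathrm{poly}(\omega) = \omega^{O(\omega)}$. The final duplicate-removal step compares $\omega^{O(\omega)}$ entries each of size $O(\omega)$, which (e.g.\ by sorting canonical encodings) is again $\omega^{O(\omega)}$. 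Therefore the deletion operation runs in $\omega^{O(\omega)}$ time, as claimed.

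The main obstacle I anticipate is the F1 direction — proving that the deletion operation applied to $S(W_{X'},X')$ reproduces $S(W_X,X)$ \emph{on the nose}, rather than merely some \sketch with the same half-square. This requires a careful case analysis of how the anchor set changes when $v$ is forgotten (in particular, that an intersection vertex $u \in N_I(v)$ which remains adjacent only to vertices of $X'\setminus\{v\} = X$ is still an anchor vertex and is \emph{not} shortcut, while one that becomes adjacent to a vertex of $V_X \setminus X$ through $v$ being forgotten would already have been a non-anchor vertex in $S(W_{X'},X')$ by the anchor-vertex remark — so the only vertices newly shortcut are exactly $v$ together with those $u \in N_I(v)$ whose only $X'$-membership came via adjacency to $v$), together with checking that shortcutting commutes with taking active boundaries and that the counter arithmetic and the homotopic-pair cleanup faithfully mirror the \sketch definition (including \cref{re:witness}). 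Everything else is bookkeeping once this structural correspondence is nailed down.
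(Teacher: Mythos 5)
Your proposal is correct and follows essentially the same route as the paper's proof: both hinge on the observation that $G_X = G_{X'}$ at a forget bag, so every \compact witness of $G_{X'}$ is one of $G_X$, and that the deletion operation applied to $S(W_{X'},X')$ produces exactly $S(W_{X'},X)$, from which F\ref{f1}--F\ref{f3} follow; the treatment of F\ref{f4}, F\ref{f5} (degrees never increase under deletion; counter sums are preserved by shortcutting; completeness is certified at the child of a forget bag) and the $\omega^{O(\omega)}$ time bound also match. Your extra care about how the anchor set changes when $v$ is forgotten is a sound elaboration of a step the paper states more tersely, not a divergence in approach.
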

\begin{proof}
Let $v$ be the vertex forgotten by $X$. We prove that the record $R_X$ generated by applying the deletion operation is \feasible, given that $R_{X'}$ is \feasible. 
In particular, since we removed possible duplicates, F\ref{f3} holds and it remains to argue about F\ref{f1} and F\ref{f2}. To this aim, since $X$ is a forget bag, note that $G_X = G_{X'}$. Hence any \compact witness $W_{X'}$ of $G_{X'}$ is also a \compact witness of~$G_X$. Moreover, since $R_{X'}$ is \feasible, it follows by F\ref{f1} that $R_{X'}$ contains a \sketch~$S(W_{X'},X')$ for every \compact witness~$W_{X'}$.  Now since $X' = X\cup\{v\}$, the \sketch  of $W_{X'}$ with respect to $X$, namely~$S(W_{X'},X)$, coincides with the one obtained by applying the deletion operation to~$S(W_{X'},X')$. Thus F\ref{f1} holds for~$X$. Similarly, since $R_{X'}$ is \feasible, it follows by F\ref{f2} that every entry of $R_{X'}$ is the \sketch $S(W_{X'},X')$  of a \compact witness $W_{X'}$ of $G_{X'}$.  Again since $X' = X\cup\{v\}$, the entry of $R_X$ obtained by applying the deletion operation to $S(W_{X'},X')$ corresponds to the \sketch $S(W_{X'},X)$. Thus F\ref{f2} holds for $X$ and consequently $R_X$ is \feasible, as claimed. Suppose now that $R_{X'}$ is $k$-map \feasible, i.e, $R^*_{X'} \neq \emptyset$. We show how to check whether a \sketch of $R_X$ belongs to $R^*_X$. Since the deletion operation does not modify the degree of any intersection vertex, the subrecord $R^*_X$  contains all \sketches of $R_X$ generated from \sketches in $R^*_{X'}$. Based on this observation, we can check whether $R^*_{X} = \emptyset$ or not. In the former case the algorithm rejects the instance, in the latter case $R_X$ is $k$-map \feasible. Suppose that $R_{X'}$ is hole-free \feasible, i.e., $R^\circ_{X'} \neq \emptyset$. Again the subrecord $R^\circ_X$  contains all \sketches of $R_X$ that have been generated from \sketches in $R^\circ_{X'}$ and that contain no active boundary whose edge counters sum up to 4. To decide whether an active boundary is complete, it suffices to check  whether the parent of $X$ is a forget bag such that the shortcuttings due to the removal of the forgotten vertex make that active boundary a self-loop. If any complete active boundary does not meet this condition, the corresponding \sketch does not~belong~$R^\circ_X$. As before if $R^\circ_{X} = \emptyset$ the algorithm rejects the instance, otherwise $R_X$ is hole-free \feasible.

By \cref{le:rx}, $R_{X'}$ contains $\omega^{O(\omega)}$ entries, each of size $O(\omega)$. Updating each of them takes $O(\omega)$ time. Also, $R_{X}$ contains at most as many entries as $R_{X'}$. It follows that removing duplicates can be naively done in $(\omega^{O(\omega)})^2 \in \omega^{O(\omega)}$ time. For the sake of efficiency, if we interpret each rotation and position system together as a number with $\tilde{O}(\omega^2)$ bits, then removing duplicates can be done in $\tilde{O}(\omega^2) \cdot \omega^{O(\omega)}\in \omega^{O(\omega)}$ time by using radix sort (we omit the details as the asymptotic running time would be the same). We have seen that condition F\ref{f4} is always verified. Checking condition F\ref{f5} requires scanning each active boundary in $R_X$ and decide whether it is complete or not, and if so to verify whether it will become a self-loop when visiting the parent of $X$. This can be done in $O(\omega)$ time for each of the $O(\omega)$ active boundaries of each of the $\omega^{O(\omega)}$ \sketches, and thus in $\omega^{O(\omega)}$ time overall. Thus~$R_X$ and its subrecords can be computed in $\omega^{O(\omega)}$ time, as desired.\end{proof}

\begin{lemma}\label{lem:introduce}
Let $X$ be an introduce bag whose child $X'$ in $T$ has a $k$-map (resp.\ hole-free) \feasible record $R_{X'}$. 
The algorithm either rejects the instance or computes a $k$-map (resp.\ hole-free) \feasible record $R_X$ of~$X$~in~$\omega^{O(\omega)}$~time.
\end{lemma}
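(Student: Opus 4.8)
\textbf{Proof plan for \Cref{lem:introduce}.}
The plan is to mirror the structure of the proof of \Cref{lem:forget}, establishing first correctness (that the addition operation yields a \feasible record, and handles the subrecords for the $k$-map and hole-free variants) and then the running time bound. Let $v$ be the vertex introduced by $X$, so that $X = X' \cup \{v\}$ and $V_X = V_{X'} \cup \{v\}$, while $G_X$ is obtained from $G_{X'}$ by adding $v$ together with its edges to $N_X(v) \subseteq X$. For F\ref{f1}, I would fix an arbitrary \compact witness $W_X$ of $G_X$ and let $W_{X'}=W_X[G_{X'}]$ be its restriction; since $R_{X'}$ is \feasible, $S(W_{X'},X')$ appears in $R_{X'}$, and I need to argue that the embedded graph obtained by the addition operation applied to $S(W_{X'},X')$ includes $S(W_X,X)$ among the generated entries. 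The key point is that the embedding of $W_X$, restricted to the neighborhood of $v$ and to the active faces it touches, is captured by some combination of (i) a choice of intersection vertices incident to $v$ (reused from $B_f$ or newly created), (ii) a rotation at $v$, and (iii) a permutation of $v$'s edges consistent with the position system — and the addition operation enumerates \emph{all} such combinations, so in particular the one realized by $W_X$ is produced. For F\ref{f2}, conversely, every entry produced is obtained by a crossing-free insertion of $v$ into a topological embedding underlying some $S(W_{X'},X')$, which by F\ref{f2} for $R_{X'}$ comes from an actual \compact witness $W_{X'}$ of $G_{X'}$; re-inflating the shortcut non-anchor vertices and adding $v$ with the prescribed connections yields a \compact witness $W_X$ of $G_X$ whose sketch is exactly that entry — here one must check that the insertion does not create a false adjacency (it cannot, since new edges only touch $v$ and vertices of $N_X(v)$, and $v$'s half-square neighbors are precisely $N_X(v)$ and the vertices reachable via the chosen intersection vertices, which are forced to lie in $N_X(v)$ too because we only reuse intersection vertices all of whose neighbors are in $N_X(v)$). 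Removing duplicates at the end gives F\ref{f3}.

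For the subrecords, suppose $R_{X'}$ is $k$-map \feasible. An entry of $R_X$ goes into $R^*_X$ iff none of its intersection vertices has degree exceeding $k$; since the addition operation only changes degrees of intersection vertices incident to $v$ (either newly created, or reused ones whose degree increases by one), and the degrees of all other intersection vertices are inherited from the parent sketch, this can be checked directly on each generated entry, and we discard any entry with an over-$k$ intersection vertex only from the subrecord (not from $R_X$ itself). If $R^*_X=\emptyset$ the algorithm rejects. For the hole-free case, $R^\circ_X$ additionally requires every \emph{complete} active boundary to have edge counters summing to $4$; since the counters of newly created edges are set to $1$ and existing counters are only recombined when shortcutting (which does not happen in the addition operation), we can again verify F\ref{f5} by scanning each generated entry, using the same criterion as in \Cref{lem:forget} to recognize complete active boundaries, and reject if the resulting subrecord is empty.

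For the running time, by \Cref{le:rx} the record $R_{X'}$ has $\omega^{O(\omega)}$ entries of size $O(\omega)$. Processing one entry in Case~1 (when $N_X(v)=\emptyset$) is immediate: we add the closed walk $\langle v\rangle$ to each of the $O(\omega)$ active boundaries, producing $O(\omega)$ new entries in $O(\omega)$ time each. In Case~2 we first locate a face whose active boundary contains all of $N_X(v)$ (at most $O(\omega)$ such faces, found in $O(\omega)$ time), and for each we enumerate: at most $2^{\omega-1}$ choices of which subset of a group of same-neighborhood intersection vertices to reuse together with whether to create new ones; for each such choice, at most $\omega^{O(\omega)}$ rotation systems at $v$; and, when $B_f$ has several closed walks, at most $\omega^{O(\omega)}$ non-crossing permutations of $v$'s edges among the walks — so $\omega^{O(\omega)}$ candidate embedded graphs per entry, each constructible and checkable (for inessential vertices, twin-pairs, and updated counters) in $\mathrm{poly}(\omega)$ time. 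Thus the total number of entries before deduplication is $\omega^{O(\omega)}\cdot\omega^{O(\omega)}=\omega^{O(\omega)}$, and removing duplicates costs $\tilde O(\omega^2)\cdot\omega^{O(\omega)}\in\omega^{O(\omega)}$ via radix sort exactly as in \Cref{lem:forget}. I expect the main obstacle to be the correctness direction F\ref{f1}: one must carefully argue that every embedding of $v$ that can occur in a \compact witness of $G_X$ — including subtle cases where a neighbor of $v$ appears multiple times along a closed walk of $B_f$, or where inserting $v$ splits an active face into several new active faces — is actually among the combinations the addition operation enumerates, and that restricting back to $G_{X'}$ really produces a sketch present in $R_{X'}$ rather than some non-\compact artifact.
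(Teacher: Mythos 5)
Your proposal is correct and follows essentially the same route as the paper's proof: F\ref{f1} via restricting an arbitrary \compact witness $W_X$ to $G_{X'}$ and invoking the exhaustiveness of the addition operation, F\ref{f2} by lifting each generated entry back to a \compact witness, the subrecord checks for F\ref{f4}/F\ref{f5} as in \cref{lem:forget}, and the $\omega^{O(\omega)}$ time bound via \cref{le:rx}. The only cosmetic difference is that you handle F\ref{f3} by explicit deduplication, whereas the paper argues that duplicates cannot arise from distinct parent entries (by F\ref{f3} for $R_{X'}$) nor from a single one (since the insertions are enumerated without repetition); both arguments are fine.
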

\begin{proof}
Let $v$ be the vertex introduced by $X$. We prove that the record $R_X$ generated by applying the addition operation is \feasible, given that $R_{X'}$ is \feasible. 
Regarding F\ref{f1}, let $W_{X'}$ and $W_X$ be a witness of $G_{X'}$ and $G_X$, respectively, such that $W_X[G_{X'}]=W_{X'}$. Since F\ref{f1} holds for $R_{X'}$, we know that $S(W_{X'},X') \in R_{X'}$. Observe that the only difference between $W_X$ and $W_{X'}$ lies in the presence of vertex $v$ and of a (possibly empty) set $I_v$ of intersection vertices adjacent to $v$.

If $N_X(v) = \emptyset$, then $v$ forms a trivial closed walk that might be added in any face of $W_{X'}$ that either consists of exactly one anchor vertex or contains at least two anchor vertices (among possibly other non-anchor vertices). 
We recall that an active face satisfying the mentioned properties corresponds to an active boundary of the witness' \sketch. 
Also, adding the closed walk to a face that contains more than one vertex, but at most one anchor vertex, on its boundary would imply that the resulting witness cannot be augmented to a witness of $G$, since $G$ is biconnected. Since Case 1 places $v$ in all possible active boundaries of $S(W_{X'},X')$, we can conclude that $S(W_X,X)$ belongs to $R_X$. 

On the other hand, if $N_X(v) \neq \emptyset$,  then all $v$'s neighbors belong to a common boundary of some face $f$ of $W_{X'}$, as otherwise the rotation system of $W_X$ would not be compatible with a topological embedding (in particular, some edges would cross each other). Hence all $v$'s neighbors are part of the same active boundary $B_f$ of $S(W_{X'},X')$. Since Case 2 exhaustively considers all ways in which $v$ can be inserted into $B_f$,  avoiding inessential intersection vertices and twin-pairs (which cannot belong to $W_X$ since it is \compact), we can again conclude that $S(W_X,X)$ belongs to $R_X$.  Consequently F\ref{f1} holds for $R_X$. 

About F\ref{f2}, it suffices to prove that each entry generated by the addition operation is indeed a \sketch of some \compact witness of $G_X$ with respect to $X$. Since F\ref{f2} holds for $R_{X'}$, the addition operation starts from a \sketch $S(W_{X'},X')$ and it generates new entries in which there are neither inessential intersection vertices nor twin-pairs; therefore, such entries are indeed \sketches of \compact witnesses, as desired. 

Concerning F\ref{f3}, if $R_X$ contained two entries $r_1,r_2$ that are the same (up to a homeomorphism of the sphere), then $r_1$ and $r_2$ would have been originated by the same \sketch $r$ of $R_{X'}$, as otherwise either $r_1$ and $r_2$ would not be the same or F\ref{f3} would not hold for $R_{X'}$. On the other hand, since the addition operation inserts $v$ in different ways but without repetitions, it cannot generate two entries that are the same starting from a single entry of $R_{X'}$. Thus F\ref{f3} holds for $R_{X}$. 

If $R_{X'}$ is $k$-map \feasible, we know that $R^*_X$ contains those \sketches of $R^*_{X'}$ for which the addition operation did not introduce intersection vertices of degree larger than $k$. Based on this observation, we can check whether $R^*_{X} = \emptyset$ or not. In the former case the algorithm rejects the instance, in the latter case $R_X$ is $k$-map \feasible.  The case when $R_{X'}$ is hole-free \feasible can be proved analogously as in the proof of \cref{lem:forget}.

Finally, each single entry constructed by the addition operation can be computed in $O(\omega)$ time and  $R_X$ contains $\omega^{O(\omega)}$ entries by \cref{le:rx}. Also, condition F\ref{f4} can be easily verified in $O(\omega)$ time, for each of the $\omega^{O(\omega)}$ \sketches of $R_X$. Checking condition F\ref{f5} requires scanning each active boundary in $R_X$ and decide whether it is complete or not. This can be done in $O(\omega)$ time, for each of the $O(\omega)$ active boundaries of each of the $\omega^{O(\omega)}$ \sketches, and thus in $\omega^{O(\omega)}$ time overall. Thus $R_X$ and its subrecords can be computed in $\omega^{O(\omega)}$ time.
\end{proof}

The proof of the next lemma exploits the merge operation.

\begin{lemma}\label{lem:join}
Let $X$ be a join bag whose children $X_1$ and $X_2$ in $T$ both have $k$-map (resp.\ hole-free) \feasible records $R_{X_1}$ and $R_{X_2}$. The algorithm either rejects the instance or computes a $k$-map (resp.\ hole-free) \feasible record $R_X$ of $X$ in $\omega^{O(\omega)}$ time.
\end{lemma}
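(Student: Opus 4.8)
\textbf{Proof plan for \cref{lem:join}.}
The plan is to mirror the structure of the proofs of \cref{lem:forget} and \cref{lem:introduce}: first establish that the record $R_X$ produced by the merge operation is \feasible (i.e., satisfies F\ref{f1}--F\ref{f3}), then handle the subrecords for the $k$-map and hole-free variants, and finally bound the running time. Throughout, the key structural fact I would rely on is that, since $X$ is a join bag, $X=X_1=X_2$ and $G_{X_1}$ and $G_{X_2}$ share exactly the vertices of $X$, so $G_X=G_{X_1}\cup G_{X_2}$ with $V_{X_1}\cap V_{X_2}=X$. Consequently any \compact witness $W_X$ of $G_X$ restricts to \compact witnesses $W_X[G_{X_1}]$ and $W_X[G_{X_2}]$ of the two subgraphs, and conversely a witness of $G_X$ is obtained by ``gluing'' two witnesses of $G_{X_1}$ and $G_{X_2}$ along the real vertices of $X$ in an embedding-compatible way.

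For F\ref{f1} I would take an arbitrary \compact witness $W_X$ of $G_X$, let $W_{X_1}=W_X[G_{X_1}]$ and $W_{X_2}=W_X[G_{X_2}]$, and invoke F\ref{f1} for $R_{X_1}$ and $R_{X_2}$ to get $S(W_{X_1},X)\in R_{X_1}$ and $S(W_{X_2},X)\in R_{X_2}$. The heart of the argument is then to check that when the merge operation processes this particular pair of \sketches, one of the abstract graphs it enumerates in (S.\ref{s:1}), together with one of the combinatorial embeddings of (S.\ref{s:2}) and one of the mappings $\phi_1,\phi_2$ of (S.\ref{s:3}), reproduces exactly $S(W_X,X)$: the intersection vertices of $W_X$ that are anchor for $X$ are partitioned according to whether their (real) neighbors lie in $G_{X_1}$, in $G_{X_2}$, or only in $X$; those shared only through $X$ are precisely the ones identified by condition (b) of (S.\ref{s:1}), while the rotation/position system of $W_X$ restricts to that of each $W_{X_i}$, which is what (S.\ref{s:4}) verifies, and the ``no subgraph in a non-active face'' check of (S.\ref{s:5}) is satisfied because in $W_X$ the portion coming from $G_{X_1}$ cannot be embedded inside a face of the $G_{X_2}$-part that has at most one anchor vertex (otherwise $G$ would not be biconnected). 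Removing inessential vertices and twin-pairs in (S.\ref{s:7}) does not change the half-square, so the final entry is $S(W_X,X)$. For F\ref{f2}, conversely, every entry the merge operation outputs is realized by gluing the two witnesses certifying the input \sketches, and steps (S.\ref{s:2})--(S.\ref{s:5}) guarantee the glued object is a genuine embedded planar bipartite graph whose half-square is $G_X$, while (S.\ref{s:7}) makes it \compact. F\ref{f3} follows from the explicit deduplication in (S.\ref{s:8}).

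The subrecords are handled exactly as in the previous two lemmas. For $k$-map \feasibility, note that the merge operation may create new intersection vertices (by merging an intersection vertex of $S(W_{X_1},X)$ with one of $S(W_{X_2},X)$ that has the same real neighborhood, via condition (b)), so unlike the forget case degrees can grow; we place an entry in $R^*_X$ iff it contains no intersection vertex of degree $>k$, which is checkable in $O(\omega)$ time per \sketch, and reject if $R^*_X=\emptyset$. For hole-free \feasibility, we additionally need the edge counters $c(e)$: when two edges are identified during the merge, or when a shortcut edge is formed in (S.\ref{s:7}), the counter of the resulting edge is the sum of the counters of the edges it replaces, capped at $5$ (anything exceeding $4$ kills the candidate), and an entry goes into $R^\circ_X$ iff every complete active boundary has edge counters summing to $4$; completeness is again detected by looking at whether the parent of $X$ is a forget bag collapsing that boundary to a self-loop, exactly as in \cref{lem:forget}.

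For the running time, by \cref{le:rx} each of $R_{X_1},R_{X_2}$ has $\omega^{O(\omega)}$ entries of size $O(\omega)$, so there are $\omega^{O(\omega)}$ pairs of \sketches to process; for each pair, (S.\ref{s:1}) enumerates $2^{O(\omega)}$ candidate intersection-vertex sets and $\omega^{O(\omega)}$ abstract graphs, (S.\ref{s:2}) enumerates $\omega^{O(\omega)}$ combinatorial embeddings (as in the proof of \cref{lem:number}), and (S.\ref{s:3}) enumerates $\omega^{O(\omega)}$ pairs of mappings; each resulting candidate is checked in steps (S.\ref{s:4})--(S.\ref{s:7}) in $\mathrm{poly}(\omega)$ time, and the final deduplication in (S.\ref{s:8}) costs $\omega^{O(\omega)}$ (e.g.\ via radix sort on the $\tilde{O}(\omega^2)$-bit encodings of the rotation/position systems, as in \cref{lem:forget}). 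Multiplying these $\omega^{O(\omega)}$ factors together still gives $\omega^{O(\omega)}$ total time. The main obstacle, and the part that requires the most care, is the correctness of the gluing in F\ref{f1}: one must argue that the merge's enumeration is rich enough to reconstruct \emph{every} \compact witness of $G_X$ as a combination of restricted witnesses of $G_{X_1}$ and $G_{X_2}$ — in particular that no information lost in passing from $W_{X_i}$ to its \sketch is actually needed, which is where the defining property of anchor vertices (that non-anchor intersection vertices already have a neighbor outside the current bag and hence cannot acquire further edges) and the biconnectivity of $G$ do the work.
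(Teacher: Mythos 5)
Your proposal is correct and follows essentially the same route as the paper's proof: restrict $W_X$ to $G_{X_1}$ and $G_{X_2}$ for F1, argue that Steps (S.1)--(S.7) reproduce $S(W_X,X)$, glue the certifying witnesses for F2, deduplicate for F3, and multiply the $\omega^{O(\omega)}$ enumeration factors for the running time. One minor inaccuracy: condition (b) of (S.1) forces any identified intersection vertex to keep the \emph{same} neighbor set, so degrees cannot grow during the merge (the paper uses exactly this to conclude that $R^*_X$ consists of all entries generated from $R^*_{X_1}$ and $R^*_{X_2}$); your extra degree check is harmless, but the stated justification for needing it is off.
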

\begin{proof}
We prove that the record $R_X$ generated by applying the merge operation is \feasible, given that $R_{X_1}$ and $R_{X_2}$ are \feasible.
Consider any \compact witness $W_X$ of $G_X$ and its restrictions $W_X[G_{X_1}]$ and $W_X[G_{X_2}]$ to $G_{X_1}$ and $G_{X_2}$, respectively. By definition of restriction, there must exist a mapping of the intersection vertices of $W_X$ to the intersection vertices of $W_X[G_{X_1}]$ such that when looking at the restriction of $W_X$ to the real and intersection vertices of $W_X[G_{X_1}]$ (up to the above mentioned mapping), the rotation and position systems of $W_X[G_{X_1}]$ are preserved. The same property must hold for $W_X[G_{X_2}]$. These properties clearly carry over to the corresponding \sketches $S(W_X,X)$,  $S(W_X[G_{X_1}],X)$, and $S(W_X[G_{X_2}],X)$. Since $R_{X_1}$ and $R_{X_2}$ are \feasible, they contain $S(W_X[G_{X_1}],X)$ and $S(W_X[G_{X_2}],X)$, respectively. Hence, Steps S.\ref{s:1}--S.\ref{s:4} guarantee that the aforementioned mapping is considered and that all the above properties hold on the candidate solutions given by the combination of $S(W_X[G_{X_1}],X)$ and $S(W_X[G_{X_2}],X)$. Moreover, any subgraph of $W_X$ that belongs to $W_X[G_{X_1}]$ but not to $W_X[G_{X_2}]$, except for the shared vertices of~$X$, must lie in an active face of $W_X[G_{X_2}]$ (and vice-versa); if this is not the case, then $W_X$ would not be augmentable to a witness of $G$, since $G$ is biconnected. This property translates into verifying that any subgraph of $S(W_X[G_{X_1}],X)$ lies in an active face of $S(W_X[G_{X_2}],X)$ (and vice-versa). This is achieved in Step S.\ref{s:5}. Step  S.\ref{s:6} suitably updates the active boundaries so that a boundary is active only if it represents a face of $W_X$ that either consists of exactly one anchor vertex or contains at least two anchor vertices, as by definition of active boundary. Step  S.\ref{s:7} removes inessential intersection vertices and twin-pairs, which is a safe operation because $W_X$ is \compact. Therefore we can conclude that $S(W_X,X)$ belongs to $R_X$, and thus F\ref{f1} holds for $R_X$.   Concerning F\ref{f2}, any entry $S$ in $R_X$ generated by the merge operation, starting from entries $S(W_{X_1},X)\in R_{X_1}$ and $S(W_{X_2},X)\in R_{X_2}$, defines a way to combine the combinatorial embeddings of $S(W_{X_1},X)$ and $S(W_{X_2},X)$ at common real vertices and at possibly common (based on some mappings $\phi_1$ and $\phi_2$) intersection vertices. Such information can be used to combine in the same way the corresponding witnesses $W_{X_1}$ and $W_{X_2}$, which exist because F\ref{f2} holds for $R_{X_1}$ and $R_{X_2}$, respectively. On the other hand, such combination yields a \compact witness $W_X$ of $G_X$ with respect to $X$, whose \sketch is $S$, as desired. Thus F\ref{f2} holds for $R_X$. In Step S.\ref{s:8} we remove possible duplicates, hence F\ref{f3} holds by construction for $R_X$. Therefore $R_X$ is \feasible. Since the merge operation does not increase the degree of intersection vertices, and since  $R_{X_1}$ and $R_{X_2}$ are $k$-map \feasible, the subrecord $R^*_X$  contains all \sketches of $R_X$ generated from \sketches in $R^*_{X_1}$ and $R^*_{X_2}$. If  $R^*_X=\emptyset$, the algorithm rejects the instance, otherwise $R_X$ is $k$-map \feasible. If $R_{X_1}$ and $R_{X_2}$ are hole-free \feasible, $R^\circ_X$  contains all \sketches of $R_X$ that are generated from \sketches in $R^\circ_{X_1}$ and $R^\circ_{X_2}$ and whose complete active boundaries are such that the edge counters sum up to 4. If $R^\circ_X=\emptyset$, the algorithm rejects the instance, otherwise $R_X$ is hole-free \feasible.

Concerning the time complexity, we process each pair of \sketches, one in $R_{X_1}$ and one in $R_{X_2}$, and since both $R_{X_1}$ and $R_{X_2}$ are \feasible, we have $\omega^{O(\omega)}$ such pairs. Each of Steps S.\ref{s:1}, S.\ref{s:2}, and S.\ref{s:3} generates $\omega^{O(\omega)}$ new entries, and each entry is computed in $O(\omega)$ time. The remaining steps all run in $O(\omega)$ time for each processed entry.
Condition F\ref{f4} can be easily verified in $O(\omega)$ time, for each of the $\omega^{O(\omega)}$ \sketches of $R_X$. Furthermore, verifying condition F\ref{f5} requires scanning the active boundaries of each entry in $R_X$ and deciding whether it is complete or not. This can also be done in $O(\omega)$ time for each of the $O(\omega)$ active boundaries of each of the $\omega^{O(\omega)}$ \sketches, and thus in $\omega^{O(\omega)}$ time overall. Consequently, $R_X$ and its subrecords can be computed in $\omega^{O(\omega)}$ time.
\end{proof}

\noindent Lemmas~\ref{lem:forget}--\ref{lem:join} imply the next theorem, which summarizes the correctness of the approach.

\begin{theorem}\label{th:correct}
Let $G$ be a graph in input to the algorithm, along with a nice  tree-decomposition $(T,\cal X)$ of $G$ and an integer $k>0$. Graph $G$ is a $k$-map graph, respectively a hole-free $k$-map graph, if and only if the algorithm reaches the root $\rho$ of $T$ and the record $R_\rho$ is $k$-map~\feasible, respectively hole-free~\feasible. 
\end{theorem}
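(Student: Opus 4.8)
The plan is to prove \cref{th:correct} by a bottom-up induction on the nodes of the nice tree-decomposition $T$, establishing the invariant that, for every bag $X$, the algorithm either rejects or computes a $k$-map (resp.\ hole-free) \feasible record $R_X$. The base case is the leaf bags, for which the record is computed directly in $O(1)$ time and was already observed to be \feasible and hole-free (hence $k$-map) \feasible. The inductive step is exactly the content of \cref{lem:forget}, \cref{lem:introduce}, and \cref{lem:join}, according to whether $X$ is a forget, introduce, or join bag: each of these lemmas states that, given $k$-map (resp.\ hole-free) \feasible records at the child (or children), the corresponding deletion/addition/merge operation either rejects the instance or produces a $k$-map (resp.\ hole-free) \feasible record at $X$. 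So once the inductive claim is set up, the bulk of the work is already done by those three lemmas.

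It then remains to connect the invariant at the root $\rho$ to the statement about $G$ being a $k$-map (resp.\ hole-free $k$-map) graph. Here I would argue as follows. Since $\rho$ is the root of $T$, we have $G_\rho = G$, so a \compact witness of $G_\rho$ is precisely a \compact witness of $G$. For the forward direction, suppose $G$ is a $k$-map graph (resp.\ hole-free $k$-map graph). By \cref{th:char-compact} (resp.\ \cref{th:holefree}) $G$ admits a \compact witness $W$ whose intersection vertices have degree at most $k$ (resp.\ a \compact witness that is a biconnected quadrangulation). By the invariant and property F\ref{f1}, the algorithm reaches $\rho$ and $R_\rho$ contains the \sketch $S(W,\rho)$; by F\ref{f4} (resp.\ F\ref{f5}) this \sketch lies in the subrecord $R^*_\rho$ (resp.\ $R^\circ_\rho$), which is therefore non-empty, so $R_\rho$ is $k$-map \feasible (resp.\ hole-free \feasible). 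Here one should note that when $W$ is a quadrangulation, each face is bounded by a single closed walk with exactly $4$ edges, so every complete active boundary of $S(W,\rho)$ has edge counters summing to $4$, matching F\ref{f5}; one also uses \cref{pr:connectivity} to know that $G$ is biconnected in the hole-free case, consistently with the standing assumption in \cref{se:sketches}. For the converse, suppose the algorithm reaches $\rho$ with $R_\rho$ $k$-map \feasible (resp.\ hole-free \feasible). Then $R^*_\rho \neq \emptyset$ (resp.\ $R^\circ_\rho \neq \emptyset$), so it contains some entry $r$; by F\ref{f2} there is a \compact witness $W_\rho$ of $G_\rho = G$ with $r = S(W_\rho,\rho)$, and by F\ref{f4} (resp.\ F\ref{f5}) this $W_\rho$ has all intersection-vertex degrees at most $k$ (resp.\ all its complete active boundaries sum to $4$, which together with biconnectivity and \cref{th:holefree} forces it to be a biconnected quadrangulation). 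Applying \cref{th:char-compact} (resp.\ \cref{th:holefree}) then yields that $G$ is a $k$-map graph (resp.\ hole-free $k$-map graph).

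The main obstacle I anticipate is the hole-free direction of this last argument, specifically arguing that ``every complete active boundary has edge counters summing to $4$'' is genuinely equivalent to ``the underlying \compact witness $W_\rho$ is a quadrangulation.'' One direction is immediate (a quadrangulation yields such counters), but the other requires care: one must check that \emph{every} face of $W_\rho$ is eventually represented by some complete active boundary in some bag along the tree-decomposition — i.e.\ that for each face $f$ there really is a forget bag $\hat X$ at which the anchor count of $f$ drops to one, as sketched in the discussion preceding \cref{def:hf-feasible} — and that the counter-maintenance performed by the deletion and addition operations correctly accumulates the true number of edges of the closed walk bounding $f$. I would rely on the counter-update rules in the deletion and addition operations (where $c(e)$ is set to $1$ for new edges and to $c(e_1)+c(e_2)$ upon shortcutting) together with the fact, noted when counters are introduced, that any edge with $c(e)>4$ already certifies non-quadrangulation and is discarded, so that a surviving \sketch in $R^\circ_\rho$ cannot hide a face with more than $4$ edges. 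A secondary subtlety is simply making the leaf-bag base case explicit and confirming that the three operation-lemmas cover all non-leaf bag types of a nice tree-decomposition (forget, introduce, join), which is immediate from \cref{def:nice}.
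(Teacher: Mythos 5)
The paper itself gives no explicit proof of \cref{th:correct} beyond asserting that Lemmas~\ref{lem:forget}--\ref{lem:join} imply it, and your bottom-up induction over the nice tree-decomposition, combined with the root-level translation via F\ref{f1}, F\ref{f2}, F\ref{f4}/F\ref{f5} and \cref{th:char-compact}/\cref{th:holefree}, is exactly the intended argument. Your flagged subtleties (that non-rejection at intermediate bags follows from applying F\ref{f1} and F\ref{f4}/F\ref{f5} to the restrictions $W[G_X]$, and that the counter bookkeeping must faithfully certify the quadrangulation property) are real but are resolved by the same observations the paper already makes around \cref{def:hf-feasible}, so the proposal matches the paper's approach.
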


We are finally ready to prove \cref{th:main}. We recall that if $k \ge n-1$, recognizing  $n$-vertex (resp.\ hole-free) $k$-map graphs coincides with recognizing general $n$-vertex (resp.\ hole-free) map graphs.

\medskip\noindent\textbf{Proof of \cref{th:main}}. 
We first discuss the decision version of the problem for a fixed $k>0$. Namely, the algorithm described below is used in a binary search to find the optimal value of $k$. Recall that $t$ is the width of the tree decomposition (i.e., $\omega=t+1$). Note that, if $G$ is a positive instance, then $k$ varies in the range $[1,t+1]$, since the size of the largest clique of $G$ is at most $t+1$. Thus the algorithm is executed $O(\log t)$ times, which however does not affect the asymptotic running time.

If $G$ is not biconnected, by \cref{pr:connectivity}, it is not hole-free, and it is  $k$-map if and only if all its biconnected components are $k$-map. Hence we run our algorithm on each biconnected component independently. \cref{th:correct} implies the correctness of the algorithm (which assumes the input graph to be biconnected).

For the time complexity, suppose that $G$ has $h \ge 1$ biconnected components and let $n_i$ be the size of the $i$-th component $C_i$, for each $i \le h$. Decomposing $G$ into its biconnected components takes $O(n+m)$ time~\cite{DBLP:conf/focs/TarjanV84}, where $m$ is the number of edges of $G$ and, since $G$ has treewidth $t$, it holds $m \in O(n \cdot t^2)$. Given a tree-decomposition of $G$ with $O(n)$ nodes and width $t$, we can easily derive a tree-decomposition $(T_i, \mathcal{X}_i)$ for each $C_i$ in overall $O(n)$ time, such that each $T_i$ has $O(n_i)$ nodes and width at most $t$. Then we can apply the algorithm in~\cite{DBLP:journals/jal/BodlaenderK96} to obtain, in $O(n_i)$-time, a nice tree-decomposition of $C_i$ with $O(n_i)$ nodes without increasing the original width. Since each bag is processed in $t^{O(t)}$ time by Lemmas~\ref{lem:forget}--\ref{lem:join}, the algorithm runs in $t^{O(t)} \cdot n_i$ time for each $C_i$. Since $\sum_{i=1}^h n_i \in O(n)$, decomposing the graph and applying the algorithm to all its biconnected components takes $t^{O(t)} \cdot n$ time. 

To reconstruct a witness of a yes-instance, we store additional pointers for each record (a common practice in dynamic programming). Namely, for each \sketch $S$ of a record $R_X$ of a bag $X$, we store a pointer to the \sketch of the child bag $X'$ that generated $S$, if $X$ is an introduce or forget bag, and we store two pointers to the two \sketches of the children bags $X_1$ and $X_2$ that generated $S$, if $X$ is a join bag. With these pointers at hand, we can apply a top-down traversal of $T$, starting at any \sketch of the non-empty subrecord  of $\rho$, and reconstruct the corresponding witness $W$ by incrementally combining the retrieved \sketches, except at forget bags (the only points in which we lose information). Suppose first that $G$ is a $k$-map graph but not hole-free. If $G$ is not biconnected, a witness $W^*$ of $G$ is obtained by merging the witnesses of its biconnected components. Note that distinct witnesses corresponding to distinct biconnected components of $G$ can only share real vertices. Thus, each intersection vertex of $W^*$ has degree at most $k$ and $W^*$ is a certificate by \cref{pr:kmapdegree}. Suppose now that $G$ is a hole-free $k$-map graph. Then $G$ is biconnected and the resulting witness is a biconnected quadrangulation whose intersection vertices have degree at most $k$, a certificate by \cref{th:holefree}.
\qed

\section{Conclusions and Open Problems}\label{se:conclusions}

We have shown how to recognize (hole-free) $k$-map graphs  in linear time for inputs having bounded treewidth. The general problem of recognizing map graphs efficiently remains a major algorithmic challenge. 
To restrict the complexity of the input,  further parameters of interest might be the cluster vertex deletion number~\cite{DBLP:journals/mst/HuffnerKMN10} and the clique-width~\cite{DBLP:journals/jcss/CourcelleER93} of the input graph, as well as the treewidth of the putative witness~\cite{DBLP:journals/disopt/MnichRS18}. 

Another interesting line of research would be  generalizing our framework to recognize \emph{$(g,k)$-map graphs}, i.e., those graphs that admit a $k$-map on a surface of genus $g$ (see, e.g., \cite{DBLP:journals/siamdm/DujmovicEW17}). 

We finally recall that the complexity of recognizing (hole-free) $k$-map graphs is open for any fixed $k \ge 5$. A natural step in this direction is hence studying the complexity of recognizing (hole-free) $5$-map graphs. 

\bibliographystyle{abbrv}
\bibliography{paper}

\section*{Appendix}

\section*{Monadic second-order logic formulation}\label{apx:courcelle}

We prove that the problem of recognizing map graphs can be expressed by using MSO$_2$ logic, which implies the existence of a  fixed-parameter tractable algorithm for parameterized by treewidth.

\begin{theorem}\label{thm:mso}
Given an $n$-vertex graph $G$ of treewidth $t$, there is an algorithm that decides whether $G$ is a map graph in time $f(t) \cdot O(n)$, for some computable function $f$.
\end{theorem}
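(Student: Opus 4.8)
The plan is to express map-graph recognition as a fixed MSO$_2$ sentence over an auxiliary labeled graph of treewidth $O(t)$, and then invoke Courcelle's theorem~\cite{DBLP:journals/iandc/Courcelle90}. Recall from~\cite{DBLP:journals/jacm/ChenGP02} that $G=(V,E)$ is a map graph if and only if it admits a \emph{witness}, i.e., a planar bipartite graph $W=(V\cup I,A)$ with $A\subseteq V\times I$ and $W^2[V]=G$. Encoding this directly in MSO over $G$ fails because $W$ carries the extra vertices $I$, whose number is not bounded by any function of $t$. The way around this rests on two observations. First, any two neighbors of an intersection vertex $u$ are at distance $2$ in $W$, hence adjacent in $G$, so $N_W(u)$ is a clique of $G$ and therefore $|N_W(u)|\le t+1$. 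Second, if two intersection vertices have the same neighborhood, one of them can be deleted without changing $W^2[V]$ and without destroying planarity. Consequently, $G$ is a map graph if and only if it admits a witness whose intersection vertices have pairwise distinct neighborhoods, each being a clique of $G$ of size at most $t+1$.

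First I would compute a width-$t$ tree decomposition of $G$ (in FPT time, using~\cite{DBLP:journals/siamcomp/Bodlaender96}), with $O(n)$ bags, and build from it a labeled graph $H$: its vertex set is $V$ together with one vertex $u_S$ for each clique $S$ of $G$ with $1\le|S|\le t+1$, equipped with a unary predicate marking the vertices of $V$ as \emph{real} and the vertices $u_S$ as \emph{potential}; its edge set is $E(G)$ together with one edge $u_Sv$ for each such $S$ and each $v\in S$. Since every clique of $G$ lies in some bag, attaching to that bag a leaf bag $X_i\cup\{u_S\}$ turns the decomposition of $G$ into one of $H$ of width at most $t+1$; as each bag contains at most $2^{t+1}$ cliques, $H$ has $2^{O(t)}\cdot n$ vertices and edges and is constructible together with its decomposition in $2^{O(t)}\cdot n$ time. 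Note that the edges of $H$ joining two real vertices are exactly the edges of $G$, and $H$ has no edge between two potential vertices.

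Next I would write a single MSO$_2$ sentence $\varphi$ (independent of $t$) over this labeled graph asserting: there is a set $I$ of potential vertices such that the bipartite graph $W_I$ with vertex set $\{x : \mathrm{real}(x)\}\cup I$ and edge set $\{xy\in E(H) : \text{exactly one of } x,y \text{ lies in } I\}$ is planar and satisfies $W_I^2[V]=G$. The vertex and edge sets of $W_I$ are MSO-definable from $I$; since $H$ has no potential--potential edge, every edge of $W_I$ joins a real vertex with a vertex of $I$, so $W_I$ is automatically bipartite. Planarity of $W_I$ is stated as ``$W_I$ has no $K_5$ minor and no $K_{3,3}$ minor'', which is MSO$_2$-expressible, and $W_I^2[V]=G$ is the first-order condition $\forall v\forall w\,\big(\mathrm{real}(v)\wedge\mathrm{real}(w)\to(\mathrm{adj}_H(v,w)\leftrightarrow\exists u\,(u\in I\wedge\mathrm{adj}_H(u,v)\wedge\mathrm{adj}_H(u,w)))\big)$. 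By the two observations above, $G$ is a map graph if and only if $H\models\varphi$; Courcelle's theorem applied to the fixed $\varphi$ on the graph $H$ of treewidth at most $t+1$ decides this in time $g(t)\cdot|H|$, which is $f(t)\cdot O(n)$ overall since $|H|\in 2^{O(t)}\cdot n$.

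The main obstacle is designing the auxiliary graph so that it is simultaneously \emph{rich enough} --- the intersection vertices of some witness must be recoverable among the $u_S$'s, which is precisely where the clique-size bound $|N_W(u)|\le t+1$ and the ``distinct neighborhoods suffice'' reduction come in --- and \emph{tame enough} --- adding the $u_S$'s must not blow up the treewidth beyond $O(t)$, which is what lets Courcelle's theorem apply. The rest is a routine translation into MSO$_2$. The same template also covers $k$-maps and hole-free maps by imposing the corresponding extra MSO$_2$-expressible constraints on $W_I$ (bounding $|S|$ by $k$ for every chosen $u_S$, respectively forcing $W_I$ to be a biconnected quadrangulation, cf.\ \cref{th:holefree} and \cref{le:size}).
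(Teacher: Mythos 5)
Your proposal is correct and follows essentially the same route as the paper's appendix proof: augment $G$ with one auxiliary vertex per clique (treewidth rises to at most $t+1$, size $2^{O(t)}\cdot n$), express the existence of a witness in MSO$_2$ via planarity plus the common-neighbor condition, and invoke Courcelle's theorem. The only cosmetic difference is that you quantify over a set of clique vertices (justified by your clique-size bound and the ``duplicate neighborhoods can be deleted'' observation), whereas the paper quantifies over a subset of the auxiliary edges; both yield the same argument.
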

\begin{proof}
Let $V$ and $E$ be the vertex and edge set of $G$, respectively. We construct a  graph $G^*=(V \cup C,E \cup D)$ by augmenting $G$. For every subset $S$ of $V$ such that $S$ forms a clique in $G$: (i) We add a vertex $v_S \in C$ to $G^*$, and (ii) We add an edge $(v_S,u) \in D$ for each $u \in S$. Since $G$ has $n$ vertices and treewidth $t$, it admits a tree-decomposition $T$ with $O(n)$ bags, such that each bag contains at most $t$ vertices. Also, for any clique of $G$ there is a bag that contains all its vertices. Altogether, it follows that $G$ contains $O(2^t\,n)$ cliques and hence $G^*$ has $O(2^t\,n)$ vertices. Moreover, the treewidth of $G^*$ is at most $t+1$. Namely, we can obtain a valid tree-decomposition $T^*$ of $G^*$ from $T$ as follows. For a vertex $v_S \in C$, let $S$ be the corresponding clique in $V$ and let $\nu$ be a bag of $T$ that contains all the vertices of $S$. For each such a vertex $v_S$, we add a new leaf bag $\nu^*$ in $T^*$, connected only to $\nu$ and containing $v_S$ and all the vertices in $S$. It is immediate to verify that $T^*$ is a tree-decomposition of $G^*$ of width at most $t+1$. 

By construction, $G$ is a map graph if and only if there exists a subset $B$ of $D$ such that: (i) The graph $G_B$ formed by the edges of $B$ is planar (note that it is bipartite by construction), and (ii) For every edge $(u,v)$ of $E$, there is a path between $u$ and $v$ in $G^*$ composed of two edges of $B$. Indeed, if $B$ exists, then $G_B$ is a witness of $G$. Both conditions (planarity and the existence of a length-2 path) can be expressed in MSO$_2$ logic\footnote{B. Courcelle and J. Engelfriet. Graph Structure and Monadic Second-Order Logic - A Language-Theoretic Approach, volume 138 of Encyclopedia of mathematics and its applications. Cambridge University Press, 2012.}.  
Consequently, the statement follows by Courcelle's theorem~\cite{DBLP:journals/iandc/Courcelle90}. 
\end{proof}
 
 We remark that the proof of \Cref{thm:mso} can be easily modified to find the minimum $k$ such that $G$ is a $k$-map graph. Namely, for the decision version of the problem, it suffices to add a vertex $v_S \in C$ to $G^*$ only if the clique $S$ has size at most $k$. 
 However, it is less obvious how to adjust the proof in order to  test whether $G$ is also hole-free, in particular, how to additionally ensure that  $G_B$ has a planar embedding in which all faces have length at most six~\cite{DBLP:journals/jacm/ChenGP02}.

\end{document}